\documentclass[final, twocolumn]{IEEEtran}



\usepackage{hyperref}
\hypersetup{citecolor=blue,pdfborder=0 0 0,pdfstartview={FitH}}
\newcommand\Tstrut{\rule{0pt}{2.5ex}}         
\newcommand\Bstrut{\rule[-1.1ex]{0pt}{0pt}}
\usepackage{ifpdf}
\usepackage{cite}
\usepackage{tabularx}

\ifCLASSINFOpdf
 \usepackage[pdftex]{graphicx}

\usepackage{color}

\else

\fi

\usepackage{subcaption}
\usepackage[cmex10]{amsmath}
\usepackage{algorithmic}
\usepackage{array}
\usepackage{fixltx2e}
\usepackage{stfloats}
\newtheorem{theorem}{Theorem}
\newtheorem{lemma}[theorem]{Lemma}
\newtheorem{example}{Example}

\newenvironment{proof}[1][Proof]{\begin{trivlist}
\item[\hskip \labelsep {\bfseries #1}]}{\end{trivlist}}

\newcommand{\qed}{\nobreak \ifvmode \relax \else
      \ifdim\lastskip<1.5em \hskip-\lastskip
      \hskip1.5em plus0em minus0.5em \fi \nobreak
      \vrule height0.75em width0.5em depth0.25em\fi}
\hyphenation{op-tical net-works semi-conduc-tor}


\begin{document}
\title{Iterative Decoding of LDPC Codes over the $q$-ary Partial Erasure Channel}
\author{Rami~Cohen,~\IEEEmembership{Graduate Student Member,~IEEE,}        
        and~Yuval~Cassuto,~\IEEEmembership{Senior Member,~IEEE}
\thanks{The authors are with the Department of Electrical Engineering,
Technion - Israel Institute of Technology, Haifa, Israel 3200003 (email: rc@campus.technion.ac.il, ycassuto@ee.technion.ac.il)}
\thanks{This work was supported in part by the German-Israel Foundation, by
the Israel Ministry of Science and Technology, and by the Israel Science
Foundation.}
\thanks{Parts of this work were presented at the 2014 IEEE International Symposium on Information Theory (ISIT), Hawaii, USA, at the 8th International Symposium on Turbo Codes \& Iterative Information Processing (ISTC), Bremen, Germany, and at the 2015 IEEE International Workshop on Information Theory (ITW), Jerusalem, Israel.}
\thanks{ Copyright (c) 2014 IEEE. Personal use of this material is permitted.  However, permission to use this material for any other purposes must be obtained from the IEEE by sending a request to pubs-permissions@ieee.org.}}

\markboth{}%
{Shell \MakeLowercase{\textit{et al.}}:Low-Density Parity-Check Codes over the $q$-ary Partial Erasure Channel }

\maketitle

\begin{abstract}
In this paper, we develop a new channel model, which we name the $q$-ary partial erasure channel (QPEC). The QPEC has a $q$-ary input, and its output is either the input symbol or a set of $M$ ($2 \le M \le q$) symbols, containing the input symbol. This channel serves as a generalization to the binary erasure channel, and mimics situations when a symbol output from the channel is known only partially; that is, the output symbol contains some ambiguity, but is not fully erased. This type of channel is motivated by non-volatile memory multi-level read channels. In such channels the readout is obtained by a sequence of current/voltage measurements, which may terminate with partial knowledge of the stored level. Our investigation is concentrated on the performance of low-density parity-check (LDPC) codes when used over this channel, thanks to their low decoding complexity using belief propagation. We provide the exact QPEC density-evolution equations that govern the decoding process, and suggest a cardinality-based approximation as a proxy. We then provide several bounds and approximations on the proxy density evolutions, and verify their tightness through numerical experiments. Finally, we provide tools for the practical design of LDPC codes for use over the QPEC.
\end{abstract}

\begin{IEEEkeywords}
Density evolution, belief propagation, low-density parity-check (LDPC) codes, non-volatile memories, $q$-ary codes, partial erasure, iterative decoding, decoding threshold, erasure channels.
\end{IEEEkeywords}

\IEEEpeerreviewmaketitle

\section{Introduction}


The rapid development of memory technologies have introduced challenges to the continued scaling of memory devices in density and access speed. One of the common computer memory technologies is non-volatile memory (NVM). In multi-level NVMs, such as flash memories, an information symbol is represented in a memory cell by one of $q$ voltage levels, where information is written/read by adding/measuring cell voltage \cite{MLC, Gal}. The read process is usually performed by comparing the stored voltage level to certain threshold voltage levels. To scale storage density in NVMs, the number of levels per cell is continuously increased \cite{Meena, flash_4bits}.  As the number of levels increases, errors become more and more prevalent due to intercell interference \cite{Huang1}. In addition, the use of multi-level memory cells in the  emerging technology of resistive memories introduces significant reliability challenges \cite{ReRAM}.

Apart from classical channels and error models, multi-level memories motivate coding for a diversity of new channels with rich features. Our work here is motivated by a class of channels we call \textit{measurement channels}, in which information is written/read by adding/measuring electrical charges. These channels encompass a variety of equivocations introduced to the information by an imperfect read process, due to either physical limitations or speed constraints. The channel we propose and study here -- the \textit{$q$-ary partial erasure channel} (QPEC) -- is a basic and natural model for a measurement channel in multi-level memories. The model comes from a read process that occasionally fails to read the information at its entirety, and provides as decoder inputs $q$-ary symbols that are \textit{partially} erased. In the QPEC model, an output symbol is a \textit{set} of symbols that includes the correct input symbol. This set can be either of cardinality $1$ or a set of $M$ symbols ($M$ is a parameter, $2 \le M \le q$). In the latter case, we say that a \textit{partial erasure} event occurred, modeling the uncertainty that may occur in the read process due to imperfect measurements. Theoretically speaking, the QPEC is a generalization of the binary (or $q$-ary) erasure channel (BEC or QEC). In the BEC/QEC, symbols are either received perfectly or erased completely; in the QPEC, partially erased symbol provide information in quantity that grows as $M$ gets smaller.

In this work, we suggest the use of GF($q$) low-density parity-check (LDPC) codes \cite{Gallager1, Gallager2, Davey} for encoding information over the QPEC, due to their low complexity of implementation and good performance under iterative decoding \cite{mct}. These codes were shown to achieve performance close to the capacity for several important channels, using efficient decoding algorithms \cite{mct, Chung_cap}. Non-binary LDPC codes were considered in several works, such as in \cite{Davey, Song, Bennatan, Rathi}, and were shown as superior to binary codes in several cases \cite{Davey, Rathi}. In our analysis, we propose a message-passing decoder for decoding GF($q$)-LDPC codes over the QPEC, extending the known iterative decoder for the BEC/QEC to deal with partial erasures. The iterative-decoding performance evaluation of LDPC codes is usually performed using the density-evolution method \cite{RU} that tracks the decoding failure probability. However, this method becomes prohibitively complex in practice as $q$ increases, as it requires an iterative evaluation of multi-dimensional probability distributions \cite{RU, Bennatan}. Thus, we provide approximation schemes for tracking the QPEC decoding performance efficiently and verify their tightness. Finally, we develop tools for the design of good LDPC codes for the QPEC.

The paper is structured as follows. We begin by introducing the QPEC channel and its capacity in Section \ref{QPEC_DEF}. In Section \ref{iterative_decoding}, we give a short review of GF($q$)-LDPC codes and propose a message-passing based decoder for the QPEC. Decoding performance analysis is provided in Sections \ref{de_analysis} and \ref{pm_approx}, and code design tools are discussed in Section \ref{code_design_considerations}. Finally, conclusions are given in Section \ref{conc}.

\section{The $q$-ary Partial Erasure Channel}
\label{QPEC_DEF}
\subsection{Channel model}
\label{channel_model}

The {\it $q$-ary partial erasure channel} (QPEC) is a generalization of the well known binary erasure channel (BEC) \cite{Elias} in two ways. First, similarly to the $q$-ary erasure channel (QEC), its input alphabet is $q$-ary, with $q \geq 2$. Second, generalizing the BEC erasure event, a partial erasure occurs when the input symbol is known to belong a set of $M$ ($M$ is a parameter, $2 \le M \le q$) symbols (rather than $q$ symbols). The QPEC is defined as follows. Let $X$ be the transmitted symbol, taken from the alphabet ${\cal X} = \left\{ {0,{\alpha ^0}=1,{\alpha ^1},...,{\alpha ^{q - 2}}} \right\}$ of cardinality $q$, where $\alpha$ is a primitive element of the finite field GF($q$) (i.e., the elements in $\mathcal{X}$ are the field elements). Define the super-symbols ${?_x^{\left( i \right)}}$ for each $x \in \mathcal{X}$ and for $i=1,2,...,{q-1 \choose M-1}$, such that ${?_x^{\left( i \right)}}$ is a set of $M$ symbols containing the symbol $x$ and $M-1$ other symbols, taken from $\mathcal{X}\backslash \left\{ x \right\}$. The output $Y$ (given an input symbol $x$) is a \textit{set} of symbols, which is either the singleton $\left\{ x \right\}$, or one of the sets ${?_x^{\left( i \right)}}$ (of cardinality $M$) for some $i$. Therefore, the output alphabet $\mathcal{Y}$ contains all possible sets of cardinality $1$ and cardinality $M$ taken from $\mathcal{X}$. The transition probabilities governing the QPEC are as follows:
\begin{equation}
\label{tran_matrix}
\Pr \left( {\left. {Y = y} \right|X = x} \right) = \left\{ {\begin{array}{*{20}{c}}
{1 - \varepsilon ,}&{y = \left\{ x \right\}}\\
{\varepsilon /{q-1 \choose M-1},}&{y = ?_x^{\left(i \right)}},
\end{array}} \right.
\end{equation}
where $0 \le \varepsilon  \le 1$ is the (partial) erasure probability. 

That is, with probability $1-\varepsilon$ the input symbol is received with no error, and with probability $\varepsilon$ a partial-erasure event occurs, such that the input symbol is known up to $M$ symbols. In the latter case, the output sets ${?_x^{\left( i \right)}}$ are equiprobable. This models a situation of maximum uncertainty at the output, which is uniformly distributed on sets of cardinality $M$ containing $x$. Note that for $M=q=2$ the QPEC is equivalent to the BEC, where for $M=q>2$ the QPEC is equivalent to the QEC. The transition probabilities are given explicitly in the following example for a particular choice of $q,M$ and a transmitted symbol $x$.
\begin{example} Assume that $q=4$ and that the symbol $0$ was transmitted. If $M=2$, the possible output sets, and their transition probabilities from \eqref{tran_matrix}, are given by:
\begin{equation}
\Pr {\left( {\left. {Y = y} \right|X = 0} \right)_{q = 4,M = 2}} = \left\{ {\begin{array}{*{20}{c}}
{1 - \varepsilon ,}&{y = \left\{ 0 \right\}}\\
{\varepsilon /3,}&{y = \left\{ {0,{\alpha ^0}} \right\}}\\
{\varepsilon /3,}&{y = \left\{ {0,{\alpha ^1}} \right\}}\\
{\varepsilon /3,}&{y = \left\{ {0,{\alpha ^2}} \right\}.}
\end{array}} \right.
\end{equation}
\end{example}

\subsection{Capacity}
Denote ${p_x} \buildrel \Delta \over = \Pr \left( {X = x} \right)$, for $x=0,\alpha^0,\alpha^1...,\alpha^{q-2}$, to be the input distribution to the channel. According to the definition of the channel capacity $C$,
\begin{equation}
\label{cap_def}
C = {\max _{\left\{ {{p_x}} \right\}}} \hspace{4pt} I\left( {X;Y} \right) = {\max _{\left\{ {{p_x}} \right\}}}\left( {H\left( Y \right) - H\left( {\left. Y \right|X} \right)} \right),
\end{equation}
where $I\left( {X;Y} \right)$ is the mutual information between the input $X$ and the output $Y$, and $H\left( Y \right)$, ${H\left( {\left. Y \right|X} \right)}$ are the entropy of $Y$ and the conditional entropy of $Y$ given $X$, respectively. The conditional entropy ${H\left( {\left. Y \right|X} \right)}$ can be calculated using \eqref{tran_matrix}:
\begin{equation}
\label{cond_entropy}
H\left( {\left. Y \right|X} \right) =  - \left( {1 - \varepsilon } \right)\log \left( {1 - \varepsilon } \right) - \varepsilon \log \left( {\varepsilon /{q-1 \choose M-1}} \right).
\end{equation}
The conditional entropy is independent of ${\left\{ {{p_x}} \right\}}$ (as expected), implying that it is sufficient to maximize the entropy ${H\left( Y \right)}$ to find the capacity. The QPEC capacity is provided in the following theorem.

\begin{theorem}
\label{QPEC_cap}
\emph{(Capacity)} The QPEC capacity is:
\begin{equation}
\label{QPEC_capacity}
C\left( {{\text{QPEC}}} \right) = 1 - \varepsilon {\log _q}M,
\end{equation}
measured in $q$-ary symbols per channel use.
\end{theorem}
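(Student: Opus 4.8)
The plan is to leverage the fact, established just above, that the conditional entropy $H(Y\mid X)$ in \eqref{cond_entropy} does not depend on the input distribution $\{p_x\}$; by \eqref{cap_def} it therefore suffices to compute $\max_{\{p_x\}} H(Y)$ and subtract the constant \eqref{cond_entropy}. To handle $H(Y)$ I would introduce the erasure indicator $E$, with $E=0$ when the output set is a singleton and $E=1$ when it has cardinality $M$. Since $E$ is a deterministic function of $Y$, we have $H(Y)=H(E)+H(Y\mid E)$; and since summing the ${q-1 \choose M-1}$ equal entries of \eqref{tran_matrix} gives $\Pr(E=1\mid X=x)=\varepsilon$ for every $x$, the indicator $E$ is independent of $X$, so $H(E)$ is a constant (the binary entropy of $\varepsilon$). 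Hence
\begin{equation}
H(Y)=H(E)+(1-\varepsilon)\,H(Y\mid E=0)+\varepsilon\,H(Y\mid E=1),
\end{equation}
and only the two conditional terms depend on $\{p_x\}$.

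I would then bound each of these by a maximum-entropy argument. Given $E=0$ we have $Y=\{X\}$ and, by the independence of $E$ and $X$, the law of $X$ is unchanged, so $H(Y\mid E=0)=H(X)\le\log q$. Given $E=1$, $Y$ takes values among the ${q \choose M}$ size-$M$ subsets of $\mathcal{X}$, so $H(Y\mid E=1)\le\log{q \choose M}$. The one step that is more than bookkeeping is to check that both inequalities hold \emph{simultaneously} for a single admissible distribution, namely the uniform input $p_x=1/q$: then $X$ is uniform, and a short computation from \eqref{tran_matrix} using ${q-1 \choose M-1}=(M/q){q \choose M}$ shows $\Pr(Y=S)=\varepsilon/{q \choose M}$ for every size-$M$ set $S$, so $Y\mid E=1$ is uniform as well. (Alternatively, one can avoid guessing the maximizer: $H(Y)$ is concave in $\{p_x\}$ — it is entropy composed with the linear map $\{p_x\}\mapsto$ output distribution — and invariant under the transitive action of the additive group of GF($q$) on $\mathcal{X}$, which permutes the singletons among themselves and the size-$M$ sets among themselves, so averaging any maximizer over the group forces the uniform input to be optimal; the same binomial identity is then needed to finish.) Either way, $\max_{\{p_x\}}H(Y)=H(E)+(1-\varepsilon)\log q+\varepsilon\log{q \choose M}$.

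It remains to assemble the capacity. Rewriting \eqref{cond_entropy} as $H(Y\mid X)=H(E)+\varepsilon\log{q-1 \choose M-1}$, the $H(E)$ terms cancel in $C=\max_{\{p_x\}}H(Y)-H(Y\mid X)$, leaving $C=(1-\varepsilon)\log q+\varepsilon\log\big({q \choose M}/{q-1 \choose M-1}\big)=(1-\varepsilon)\log q+\varepsilon\log(q/M)=\log q-\varepsilon\log M$. Dividing by $\log q$ to express the rate in $q$-ary symbols per channel use gives $C(\text{QPEC})=1-\varepsilon\log_q M$, as claimed; the special cases $M=q$ (yielding $1-\varepsilon$, the QEC) and $M=q=2$ (the BEC) are then immediate. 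I expect essentially all of the work to sit in the simultaneous-tightness observation (or, in the alternative route, the concavity-plus-symmetry argument); everything else is routine once the identity ${q \choose M}/{q-1 \choose M-1}=q/M$ is in hand.
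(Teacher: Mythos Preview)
Your proof is correct but takes a genuinely different route from the paper. The paper writes $H(Y;\{p_x\})$ out explicitly as a function of $\{p_x\}$, applies Lagrange multipliers to the constrained maximization, verifies that $p_x=1/q$ satisfies the stationarity equations, and then invokes concavity of $I(X;Y)$ in $\{p_x\}$ to conclude it is the global maximizer; the capacity value is obtained by direct substitution. Your approach instead introduces the erasure indicator $E$, uses the chain-rule identity $H(Y)=H(E)+H(Y\mid E)$ together with the independence of $E$ and $X$, and then bounds each conditional term by a maximum-entropy argument, checking that both bounds are simultaneously tight at the uniform input. What your route buys is that no calculus is needed and the structure of the answer is transparent: the $H(E)$ terms cancel by design, and the binomial identity $\binom{q}{M}/\binom{q-1}{M-1}=q/M$ does all the remaining work. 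The paper's route is more mechanical but self-contained and does not require spotting the decomposition via $E$; it also makes the concavity step explicit, which your alternative symmetry argument invokes as well. Both are short; yours is arguably the more conceptual of the two.
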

The proof of this theorem is provided in Appendix \ref{proof:uniform_dist}. As one may expect due to the uniform distribution of the output when a partial-erasure occurs, ${H\left( Y \right)}$ is maximized under the uniform distribution of the input (i.e., for $p_x=1/q$). Note the agreement of \eqref{QPEC_capacity} with the QEC capacity for $M=q$, and in particular with the BEC capacity for $M=q=2$.

\subsection{Maximum-likelihood decoding}
\label{section:ML_decoding}

Assume that a codeword $\boldsymbol{c}$ taken from a codebook $\mathcal{C}$ was transmitted over the QPEC and that the output $\boldsymbol{y}$ was received. The elements $y_i$ of $\boldsymbol{y}$ should be understood in a generalized sense, as they contain either a set of one symbol or a set of $M$ symbols (according to the transition probabilities in Equation \eqref{tran_matrix}). For $M=q$, in which the QPEC is essentially the QEC, codewords coinciding with $\boldsymbol{y}$ in non-erased positions are said to be \textit{compatible} with $\boldsymbol{y}$ \cite{mct}, and they serve as maximum-likelihood (ML) decoding of $\boldsymbol{y}$. However, when $M<q$, partially-erased codeword symbol positions should be considered for the ML decoding of $\boldsymbol{y}$.

To extend the notion of compatibility to QPECs with $M<q$, we define the set:
\begin{equation}
\Psi  = \left\{ {{\boldsymbol{c}} \in \mathcal{C}:\forall i,{c_i}\bigcap {{y_i} \neq \emptyset } } \right\},
\end{equation}
which is the set of all codewords that have in each position a symbol that is contained in the corresponding output of $\boldsymbol{y}$ in the same position. Each codeword in $\Psi$ can serve as an ML decoding of $\boldsymbol{c}$, since $\boldsymbol{c}$ and $\boldsymbol{y}$ must agree in non-erased positions, and in the remaining positions the correct transmitted codeword symbol $c_i$ is contained in $y_i$ by the QPEC definition. Therefore, $\boldsymbol{y}$ is decoded correctly (with probability $1$) if and only if $\left| \Psi  \right| = 1$. In a similar manner, when ML \textit{symbol} decoding is used, $y_i$ is decoded correctly (with probability $1$) if and only if all the codewords in $\Psi$ contain the same symbol in their $i^{\text{th}}$ position. In practice, ML decoding complexity is usually prohibitive. In the next section, we move to specify a low-complexity iterative message-passing decoder for GF($q$) LDPC codes used over the QPEC.

\section{GF($q$) LDPC Codes and Message-Passing Decoding}
\label{iterative_decoding}
\subsection{GF($q$) LDPC codes}
\label{LDPC_Q}
Before developing our coding results for the QPEC, we include some well-known facts on LDPC codes as a necessary background. A GF($q$) $[n,k]$ LDPC code is defined in a similar way to its binary counterpart, by a sparse parity-check matrix, or equivalently by a Tanner graph \cite{tanner}. This graph is bipartite, with $n$ variable (left) nodes, which correspond to codeword symbols, and $n-k$ check (right) nodes, which correspond to parity-check equations. The codeword symbols are taken from GF($q$), where the labels on the graph edges are taken from the non-zero elements of GF($q$). In the graph, a check node $\mathtt{c}$ is connected by edges to variable nodes $\mathtt{v} \in \mathcal{N}(\mathtt{c})$, where $\mathcal{N}(\mathtt{c})$ denotes the set of variable nodes adjacent to check node $\mathtt{c}$. The induced parity-check equation is $\sum\limits_{\mathtt{v} \in N\left( \mathtt{c} \right)} {{h_{\mathtt{c},\mathtt{v}}}\cdot {\mathtt{v}}}  = 0$, where $h_{\mathtt{c},\mathtt{v}}$ are the labels on the edges connecting variable node $\mathtt{v}$ to check node $\mathtt{c}$. Note that the calculations are performed using GF($q$) arithmetic. An example of a Tanner graph is given in Figure \ref{fig_sim}.

\begin{figure}[!t]
\centering
\includegraphics[width=2.5in]{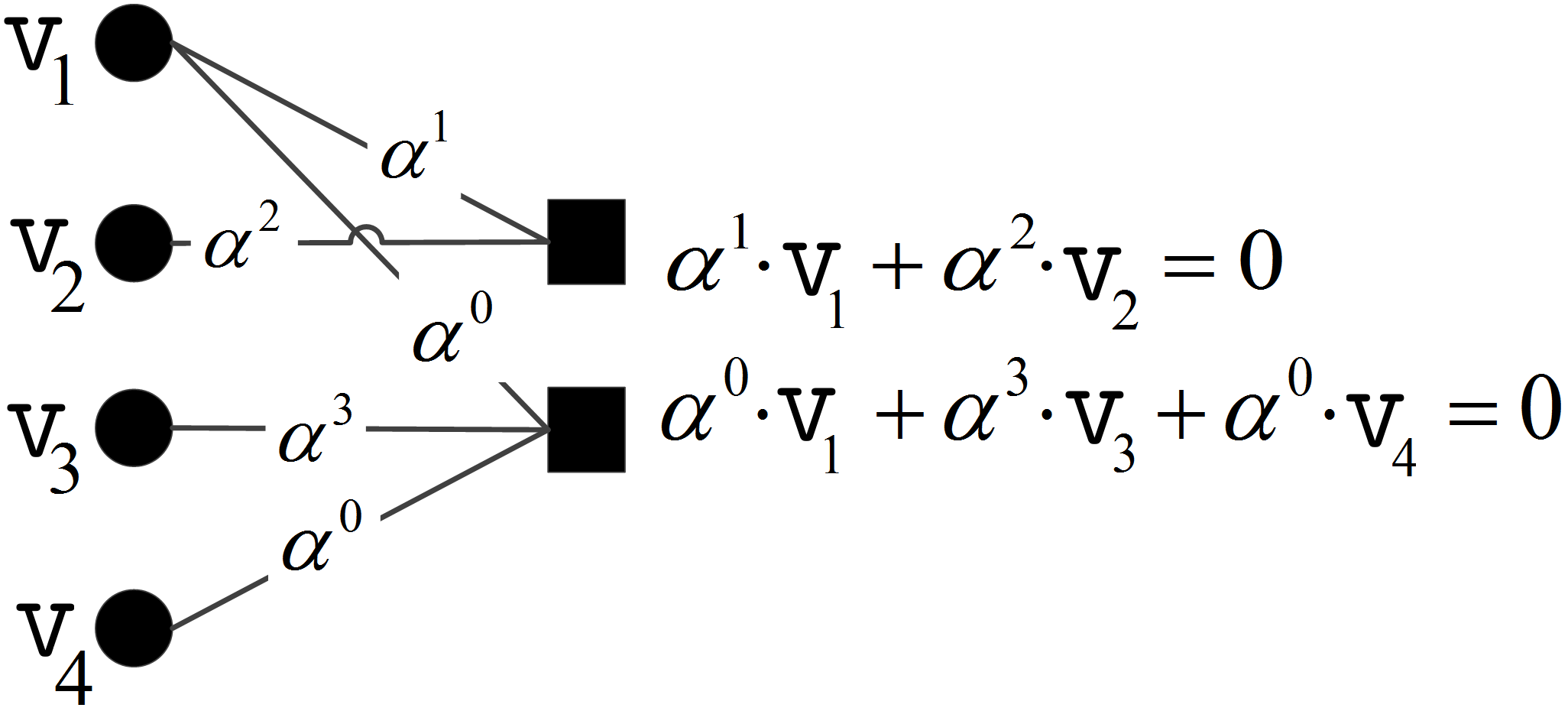}
\caption{An example of a Tanner graph over GF($4$). Circles denote variable nodes (codeword symbols), and squares denote check nodes (parity-check equations). The symbols on the edges are the labels, leading to the parity-check equations on the right.}
\label{fig_sim}
\end{figure}

LDPC codes are usually characterized by the {\it degree distributions} of the variable nodes and the check nodes. They are called {\it regular} if both variable nodes and check nodes have constant degree. Otherwise, they are called \textit{irregular}. Denote by $d_v$ and $d_c$ the maximal degree of variable nodes and check nodes, respectively. As is customary \cite{mct}, we define the following degree-distribution polynomials:
\begin{equation}
\label{lambda_polynomial}
{\lambda \left( x \right) = \sum\limits_{i = 2}^{{d_v}} {{\lambda _i}{x^{i - 1}}} },
\end{equation}
\begin{equation}
\label{rho_polynomial}
{\rho \left( x \right) = \sum\limits_{i = 2}^{{d_c}} {{\rho _i}{x^{i - 1}}} },
\end{equation}
where for each $i$, a fraction $\lambda_i$ ($\rho_i$) of the edges is connected to variable (check) nodes of degree $i$. These polynomials will be used later for analyzing the iterative-decoding performance of LDPC codes over the QPEC. The {\it design rate} $r$ of an LDPC code with degree-distribution polynomials $\lambda(x)$ and $\rho(x)$, measured in $q$-ary symbols per channel use, is \cite{mct}:
\begin{equation}
\label{LDPC_rate}
r = 1 - \frac{{\int\limits_0^1 {\rho \left( x \right)} dx}}{{\int\limits_0^1 {\lambda \left( x \right)dx} }} = 1 - \frac{{\sum\limits_{i = 2}^{{d_c}} {{\rho _i}/i} }}{{\sum\limits_{i = 2}^{{d_v}} {{\lambda _i}/i} }}.
\end{equation}
The design rate equals to the actual rate if the rows of the LDPC code parity-check matrix are linearly independent.

\subsection{Message-passing decoder for the QPEC}
\label{message_passing}

The following decoder for GF($q$) LDPC codes over the QPEC is a variation of the standard message passing/belief propagation algorithm over a Tanner graph, generalizing the iterative decoding process used over the BEC/QEC. The key change is that in the QPEC setting the exchanged beliefs are sets of symbols, rather than individual symbols (and erasure symbols) as with the BEC/QEC. We have two types of messages at each decoding iteration $l$: \textit{variable to check} (VTC) messages and \textit{check to variable} (CTV) messages, denoted ${{\rm{VTC}}_{\mathtt{v} \to \mathtt{c}}^{(l)}}$ and ${{\rm{CTV}}_{\mathtt{c} \to \mathtt{v}}^{(l)}}$, respectively. Each outgoing message from a variable (check) node to a check (variable) node depends on all its incoming messages, except for the incoming message originated from the target node. At iteration $l=0$, channel information is sent from variable nodes to check nodes: partially-erased nodes send sets of symbols of cardinality $M$, while non-erased ones send sets of cardinality $1$ (recall that both sets contain the correct symbol). The channel information sent from variable node $\mathtt{v}$ will be denoted ${{\rm{VTC}}_{\mathtt{v}}^{(0)}}$. 

\begin{figure*}[t!]
    \centering
    \begin{subfigure}[t]{0.45\textwidth}
        \centering
        \includegraphics[height=1.2in]{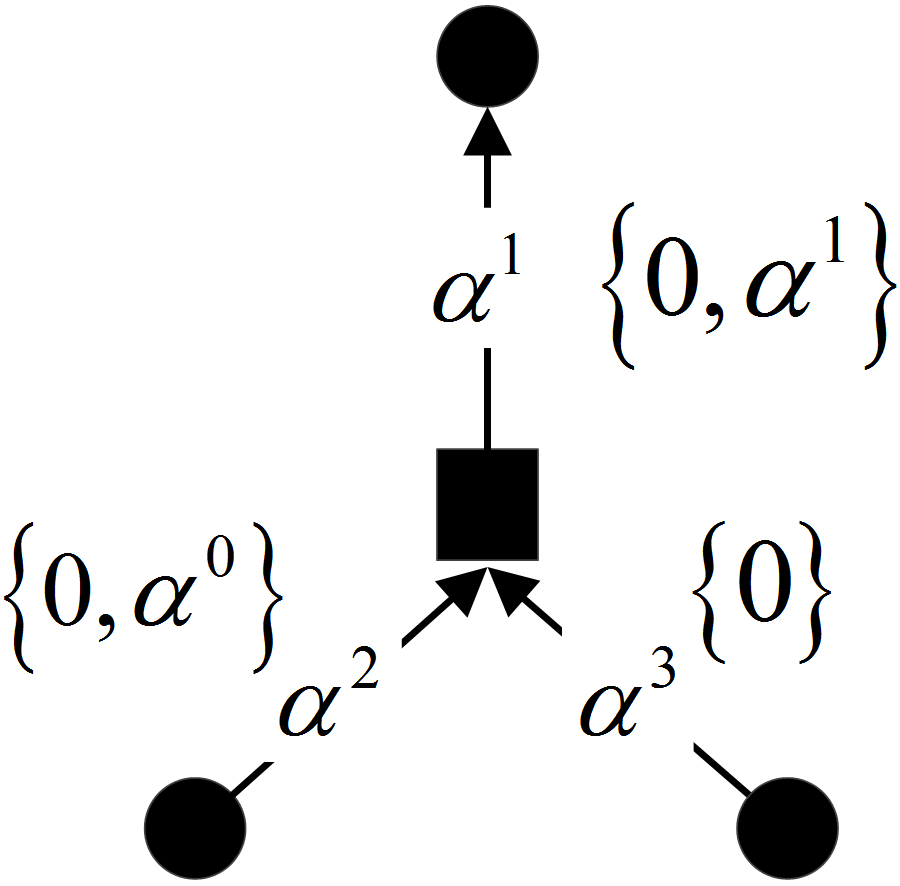}
        \caption{An example for a CTV message over GF($4$). The outgoing message is ${{{\alpha ^2}} \over \alpha } \cdot \left\{ {0,{\alpha ^0}} \right\} + {{{\alpha ^3}} \over \alpha } \cdot \left\{ 0 \right\} = \left\{ {0,{\alpha ^1}} \right\}$.}
\label{fig_CTV}
    \end{subfigure}%
    ~ \hspace{5pt}
    \begin{subfigure}[t]{0.45\textwidth}
       \centering
        \includegraphics[height=1.2in]{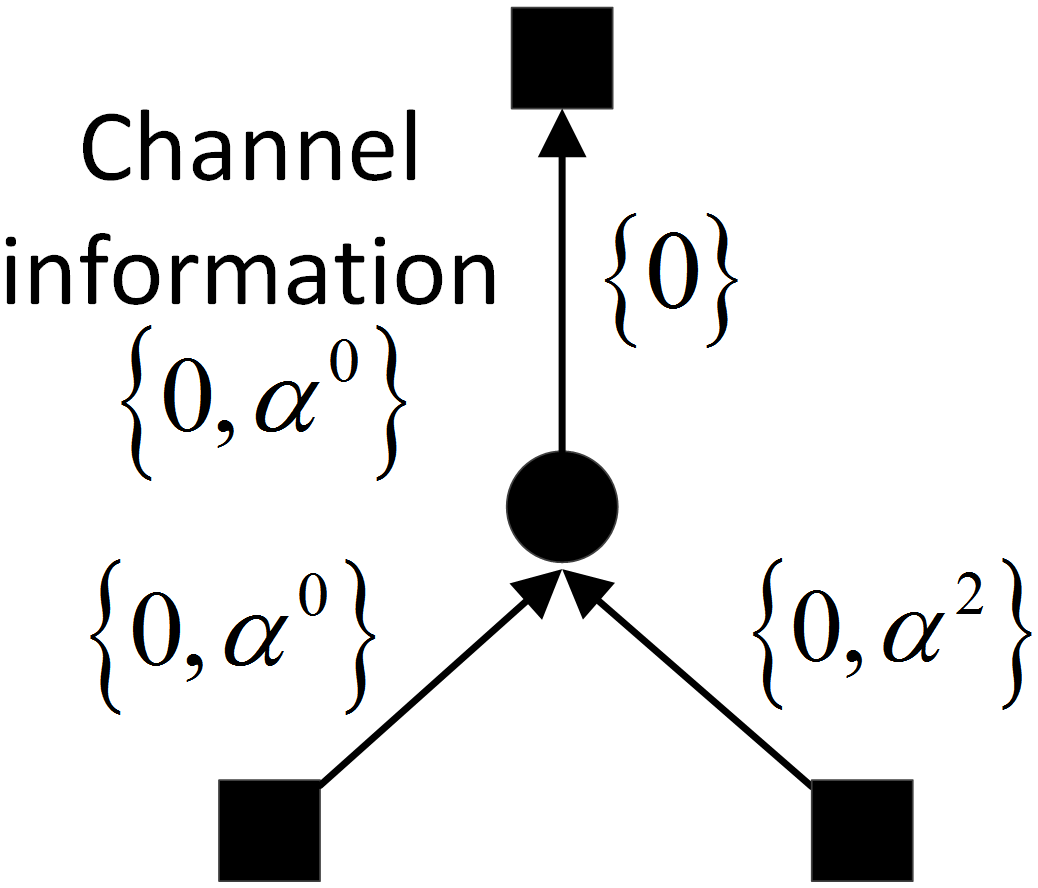}
        \caption{An example for a VTC message. The edge labels are not shown as they are not required for the VTC message calculation.}
\label{fig_VTC}
    \end{subfigure}
    \caption{Examples for CTV/VTC messages in the decoding process. Circles denote variable nodes, and squares denote check nodes. Symbols on edges represent edge labels. The two sets on the bottom are incoming messages, where the set on the top is the corresponding outgoing message.}
\end{figure*}

In the subsequent iterations, the operations of the message-passing decoder translate to the following operations on sets. ${{\rm{CTV}}_{\mathtt{c} \to \mathtt{v}}^{(l)}}$ contains the possible values of $\mathtt{v}$ given the incoming messages from the variable nodes $\left\{ {\mathcal{N}\left( \mathtt{c} \right)\backslash \mathtt{v}} \right\}$, such that the parity-check equation induced by check node $\mathtt{c}$ is satisfied. For later use, we note that the calculation of ${{\rm{CTV}}_{\mathtt{c} \to \mathtt{v}}^{(l)}}$ can be represented compactly, as follows. Define the \textit{sumset} (or \textit{Minkowski sum}) \cite{Tao} operation between sets $\mathcal{S}_j$ ($j=1,2,...,J$) that contain GF($q$) elements:
\begin{equation}
\label{minkowski}
\sum\limits_{j = 1}^J {{{\cal S}_j}}  = \left\{ {\sum\limits_{j = 1}^J {{\mathcal{S}_j}} :{s_j} \in {\mathcal{S}_j}} \right\}.
\end{equation}
That is, the sumset results in a set containing all sums (using GF($q$) arithmetic) of elements taken from  $\mathcal{S}_j$. 
\begin{example}
Assume that $q=4$ and consider the sets $\left\{ {0,{\alpha ^0}} \right\}$ and $\left\{ {0,{\alpha ^1}} \right\}$. The sumset of these sets is $\left\{ {0,{\alpha ^0}} \right\} + \left\{ {0,{\alpha ^1}} \right\} = \left\{ {0 + 0,0 + {\alpha ^1},{\alpha ^0} + 0,{\alpha ^0} + {\alpha ^1}} \right\} = \left\{ {0,{\alpha ^0},{\alpha ^1},{\alpha ^2}} \right\}$, i.e., all the field elements. On the other hand, if both sets are $\left\{ {0,{\alpha ^0}} \right\}$, then the sumset is $\left\{ {0,{\alpha ^0}} \right\} + {\rm{ }}\left\{ {0,{\alpha ^0}} \right\}\ = \left\{ {0,{\alpha ^0}} \right\}$. 
\end{example}
For each pair of check node $\mathtt{c}$ and variable node $\mathtt{v}$, define the following sets for each $\mathtt{v}' \in \left\{ {\mathcal{N}\left( \mathtt{c} \right)\backslash \mathtt{v}} \right\}$ :
\begin{equation}
\label{X_i_def}
\mathcal{B}_{\mathtt{v}'}^{(l)} = \left\{ { - {{{h_{\mathtt{c},\mathtt{v}'}} \cdot {g'}}}{{}}:{g'} \in {\rm VTC}_{\mathtt{v}' \to \mathtt{c}}^{\left( {l - 1} \right)} } \right\}, \hspace{10pt} l \ge 1
\end{equation}
which are the VTC message sets sent from the variable nodes adjacent to $\mathtt{c}$ except $\mathtt{v}$ to check node $\mathtt{c}$ at iteration $l-1$, multiplied by the additive inverses of the edge labels. We can then represent the calculation of CTV messages in a compact manner:
\begin{equation}
\label{CTV}
{\rm CTV}_{\mathtt{c} \to \mathtt{v}}^{\left( l \right)} =\frac{1}{h_{\mathtt{c},\mathtt{v}}} \sum\limits_{\mathtt{v}' \in \left\{ {\mathcal{N}\left( \mathtt{c} \right)\backslash \mathtt{v}} \right\}} {\mathcal{B}_{\mathtt{v}'}^{\left( l \right)}}, \hspace{10pt} l \ge 1.
\end{equation}
In words, ${\rm CTV}_{\mathtt{c} \to \mathtt{v}}^{\left( l \right)}$ is the set of  possible values of $\mathtt{v}$ given the incoming VTC messages from the variable nodes in $\left\{ {\mathcal{N}\left( \mathtt{c} \right)\backslash \mathtt{v}} \right\}$, where an example is given in Figure \ref{fig_CTV}. Note that a CTV message can be of cardinality between $1$ and $q$. We now move to calculate the VTC messages, which are based on the CTV messages. The VTC messages are calculated as follows:
\begin{equation}
\label{VTC}
{\rm VTC}_{\mathtt{v} \to \mathtt{c}}^{\left( l \right)} = {\rm VTC}_{\mathtt{v}}^{\left( 0 \right)}\bigcap {\left( {\bigcap\limits_{\mathtt{c}' \in \left\{ {{\cal N}\left( \mathtt{v} \right)\backslash \mathtt{c}} \right\}} {{\rm CTV}_{\mathtt{c}' \to \mathtt{v}}^{\left( l \right)}} } \right)}, \hspace{10pt} l \ge 1.
\end{equation}
That is, the VTC message ${\rm VTC}_{\mathtt{v} \to \mathtt{c}}^{\left( l \right)}$ is the set of symbols containing the \textit{intersection} of the channel information and the incoming CTV messages to variable node $\mathtt{v}$, where an example is given in Figure \ref{fig_VTC}. A VTC message cardinality can be at most $M$, as the channel information cardinality is at most $M$. A decoding failure occurs if at the end of the decoding process there is a VTC message containing more than one symbol.

The decoding process described above reduces to the known iterative decoder proposed for the BEC/QEC \cite{mct} when $M=q$. In this case, the passed messages can be either the set containing all $q$ symbols (full erasure) or a set containing the correct symbol only. This greatly simplifies the asymptotic iterative-decoding performance analysis of LDPC codes when used over the BEC \cite{mct}. However, apart from erasure/non-erasure messages in the BEC case, there are many other possible message sets in the QPEC decoding process, making the analysis prohibitively complex as $q$ increases. In the following section, we discuss our approach for low-complexity approximate asymptotic decoding performance analysis, which is later shown to capture the exact behaviour quite well.

\section{Density Evolution Analysis}
\label{de_analysis}

Density evolution analysis of decoding performance is carried out by tracking the asymptotic (in the codeword length) probability of decoding failure at each iteration based on the probabilities of the passed messages \cite{RU, Bennatan, Richardson1}. In this section, we use this method for asymptotic performance evaluation of the decoder described in Section \ref{message_passing}. As customary, we assume a randomly constructed Tanner graph with a degree-distribution pair $\lambda$ and $\rho$, and a random i.i.d. selection of edge labels distributed uniformly on the non-zero elements of GF($q$). In addition, a sufficiently large codeword length is assumed, such that incoming messages to each node at each iteration of the decoding process are statistically independent with high probability (known as the \textit{independence assumption}) \cite{RU}. We start with deriving the exact QPEC density-evolution equations, and then move to propose approximate density evolution analysis due to complexity reasons.

Let us denote by $\mathcal{S}_t$, $t=1,2,...,2^q-1$, the non-empty subsets of the input alphabet $\mathcal{X}$ (of $q$ symbols), ordered by cardinality and in lexicographical order. These subsets may be passed throughout the decoding process as either VTC or CTV messages (see Section \ref{message_passing}). Denote by $z_t^{(l)}$ the probability that a VTC message at iteration $l$ is $\mathcal{S}_t$. Similarly, denote by $w_t^{(l)}$ the probability that a CTV message at iteration $l$ is $\mathcal{S}_t$. $\mathcal{I}_{{\bar d}}$ (resp. $\mathcal{J}_{{\bar d}}$) will denote an \textit{ordered list} containing ${{\bar d}}=d-1$ indices taken (with possible repetitions) from the set of message indices $\left\{ {1,2,...,{2^{q}-1}} \right\}$, representing VTC (resp. CTV) messages to a degree-$d$ check (resp. variable) node. Enumerating the edges connected to a check (variable) node $1$ to $\bar{d}$, an element in $\mathcal{I}_{{\bar d}}$ (resp. $\mathcal{J}_{{\bar d}}$) is the index of the message on the corresponding edge. For example, there are $15^2=225$ ordered lists $\mathcal{I}_2$ for a degree-$3$ check node when $q=4$: $(1,1), \left( {1,2} \right),\left( {2,1} \right), (2,2),...,(15,15)$, where the elements of $\mathcal{I}_2$ are the first and second incoming message indices. ${{\chi_t}\left( {{\mathcal{I}_{\bar d}}} \right)}$ will denote the probability that the VTC messages indexed in $\mathcal{I}_{\bar d}$ lead to the CTV message $\mathcal{S}_t$. Similarly, $\eta_t \left( \mathcal{J}_{\bar d} \right)$ will denote the probability that the CTV messages indexed in $\mathcal{J}_{\bar d}$ lead to the VTC message $\mathcal{S}_t$. The distributions $\chi_t$ and $\eta_t$ are obtained with respect to the uniform edge labels and the channel information, as demonstrated in the following example.
\begin{example}
\label{calc_chi_eta}
Assume a degree-$3$ check node and that $q=4$. Consider $\mathcal{I}_2=(5,5)$ and recall that according to our convention, $\mathcal{S}_5 = \left\{ {0,{\alpha ^0}} \right\}$. To calculate $\chi_t (\mathcal{I}_2)$, we find all possible outcomes of the sumset $\left( {{h_1}/{h_3}} \right) \cdot \left\{ {0,{\alpha ^0}} \right\} + \left( {{h_2}/{h_3}} \right) \cdot \left\{ {0,{\alpha ^0}} \right\}$ where $h_1, h_2$ and $h_3$ are i.i.d. random variables uniformly distributed on $\left\{ {{\alpha ^0},{\alpha ^1},{\alpha ^2}} \right\}$, representing the edge labels. If $h_1=h_2=h_3$, the sumset is $\left\{ {0,{\alpha ^0}} \right\} + \left\{ {0,{\alpha ^0}} \right\}\ = \left\{ {0,{\alpha ^0}} \right\}=\mathcal{S}_{5}$. On the other hand, if the edge labels are not the same, the sumset is $\left\{ {0,\alpha^0,{\alpha ^1},\alpha^2} \right\}={\mathcal{S}_{15}}$. Therefore, the non-zero $\chi_t$ values are ${\chi _5} = 1/9$ and ${\chi_{15}} = 8/9$ in this case. Now consider a degree-$3$ variable node, where $\mathcal{J}_2 = (6,6)$ and the channel information sets are $\mathcal{S}_5 = \left\{ {0,{\alpha ^0}} \right\}$, $\mathcal{S}_6 = \left\{ {0,{\alpha ^1}} \right\}$ and $\mathcal{S}_7 = \left\{ {0,{\alpha ^2}} \right\}$ (i.e, $M=2$), each with probability $1/3$. If the channel information is $\mathcal{S}_5$ or $\mathcal{S}_7$, then the intersection between the messages indexed in $\mathcal{J}_2$ and the channel information is $\mathcal{S}_1=\left\{ {0} \right\}$. If the channel information is $\mathcal{S}_6$, the intersection results in $\mathcal{S}_6$. Therefore, we get that for $\mathcal{J}_2=(2,2)$, the non-zero $\eta_t$ values are $\eta_1 = 2/3$ and $\eta_6 = 1/3$.
\end{example}

As GF($q$) LDPC codes are linear codes, the probability of a given codeword symbol taken from the codebook is $1/q$. This means that a variable node contains a certain set composed of one symbol (i.e., non-erasure) with probability $(1-\varepsilon)/q$. To incorporate this probability in the density-evolution equations, we define the indicator ${\theta _t}$, which equals $1$ if $\left| {{\mathcal{S}_t}} \right| = 1$ and $0$ otherwise. Equipped with the notations above, we get the following compact representation of the QPEC density-evolution equations:
\begin{equation}
\label{exact_de1}
w_t^{\left( l \right)} = \sum\limits_{i = 2}^{{d_c}} {{\rho _i}} \sum\limits_{{{\cal I}_{i - 1}}} {\left( {\prod\limits_{j \in {{\cal I}_{i - 1}}} {z_j^{\left( {l - 1} \right)}} } \right) \cdot {\chi _t}\left( {{{\cal I}_{i - 1}}} \right)},
\end{equation}
\begin{equation}
\label{exact_de2}
z_t^{\left( l \right)} = \varepsilon \sum\limits_{i = 2}^{{d_v}} {{\lambda _i}\sum\limits_{{{\cal J}_{i - 1}}} {\left( {\prod\limits_{j \in {{\cal J}_{i - 1}}} {w_j^{\left( l \right)}} } \right) \cdot {\eta _t}\left( {{{\cal J}_{i - 1}}} \right)} }  + \frac{(1 - \varepsilon )}{q} \cdot \theta_t.
\end{equation}
The summation over $\mathcal{I}_{i-1}$ (or $\mathcal{J}_{i-1}$) is understood over all ordered lists containing $i-1$ elements (where $i$ is the node degree) taken from the set of indices $\left\{ {1,2,...,{2^{q}-1}} \right\}$. A decoding failure occurs when a variable node is not resolved, i.e., when it contains a set with more than one symbol:
\begin{equation}
\label{Perror_exact}
p_e^{(l)} = \sum\limits_{t:\left| {{\mathcal{S}_t}} \right| > 1} {z_t^{\left( l \right)}}  = 1 - \sum\limits_{t:\left| {{\mathcal{S}_t}} \right| = 1} {z_t^{\left( l \right)}}.
\end{equation}
Note that for $M=q$, only $z_1, w_1, z_{2^q-1}$ and $w_{2^q-1}$ (i.e., probabilities of full-erasure/non-erasure sets) might be positive. In this case, these probabilities can be represented solely by $z_{2^q-1}$, as the distributions $\chi_t$ and $\eta_t$ degenerate due to the simple BEC/QEC decoding rules. Equations \eqref{exact_de1}-\eqref{exact_de2} can be then readily simplified to obtain the BEC/QEC (one-dimensional) density-evolution equations \cite{mct}.

Calculating ${{\chi_t}\left( {{\mathcal{I}_{\bar d}}} \right)}$ and $\eta_t \left( \mathcal{J}_{\bar d} \right)$ in Equations \eqref{exact_de1}-\eqref{exact_de2} might be prohibitive in practice, as the number of subsets increases exponentially with $q$. To get an estimate of the complexity, consider basic check and variable nodes of degree $3$. Given two incoming message sets to the check, calculating the distribution ${{\chi _t}}$ requires ${\left( {q - 1} \right)^3}$ realizations of edge labels. Because there are $\mathcal{O}\left( {{2^q}} \right)$ input-set pairs, we get $\mathcal{O}\left( {{q^3} \cdot {2^{2q}}} \right)$ complexity for calculating $\chi_t$. In a similar manner, ${\cal O}\left( {{q \choose M} \cdot {2^{2q}}} \right)$ operations are required for calculating ${{\eta _t}}$ (the first factor now being the number of possible channel-information sets) for a degree-$3$ variable node. As an example, about $10^{13}$ operations are required for the calculation of $\chi_t$ when $q=16$, growing to the order of $10^{23}$ when $q=32$. In addition to prohibitive complexity, the exhaustive calculation of ${\chi_t}$ and ${\eta_t}$ as demonstrated in Example \ref{calc_chi_eta} provides no insights on their behaviour. Moreover, ${\chi _t}$ requires the explicit use of GF($q$) arithmetic, making its analysis difficult. These reasons motivate us to propose a more efficient way for estimating the QPEC decoding performance, which we discuss in the following subsection.



\subsection{Cardinality-based approximated density-evolution equations}
\label{de_equations}

To overcome the difficulties in evaluating Equations \eqref{exact_de1}-\eqref{exact_de2}, we propose to track the probability distribution of the VTC/CTV message set \textit{cardinalities}. In our approach, we approximate messages of the same cardinality passed in the decoding process as being equiprobable. The intuition behind this approximation comes from the randomness of the edge labels and the channel output that "smoothen" most of the non-uniformity that may occur due to the algebraic structure of GF($q$). In particular, as the node degrees and the field order grow, the incidence probability of equal-cardinality sets becomes increasingly uniform. The reason is that the entropy of each sum in the sumset performed at check nodes increases with the degree. In addition, the number of sums within the sumset increases with $q$, increasing the entropy of the sumset result. The approximation was verified empirically as well, where we show in Section \ref{subsection:comparison} that performance analyzed with this assumption gives a very good approximation of the true decoding performance.

To distinguish between message sets and their cardinalities, we use the notation $\mathcal{M}_{\bar d}$ to denote an ordered list of $\bar d=d-1$ elements taken from $\left\{ {1,2,...,q} \right\}$, understood as possible incoming message-set cardinalities to a degree $d$ check node. $W_m^{(l)}$ (resp. $Z_m^{(l)}$) will denote the probability that a CTV (VTC) message at iteration $l$ is of cardinality $m=1,2,...,q$. $P_m(\mathcal{M}_{\bar d})$ (resp. $Q_m(\mathcal{M}_{\bar d})$) will denote the probability that the message-set cardinalities in $\mathcal{M}_{\bar d}$ lead to an outgoing CTV (VTC) message of cardinality $m$. Note that the distributions $P_m$ and $Q_m$ are obtained by summing the probabilities of $\chi_t$ and $\eta_t$ for all $t$ with $|\mathcal{S}_t|=m$, assuming uniform distribution on the input sets with cardinalities in $\mathcal{M}_{\bar d}$. Finally, under our approximation, the following equations are derived:
\begin{equation}
\label{cardinality_de1}
W_m^{\left( l \right)} \simeq \sum\limits_{i = 2}^{{d_c}} {{\rho _i}}  \cdot \sum\limits_{\mathcal{M}_{i - 1}} {\left( {\prod\limits_{m' \in \mathcal{M}_{i-1}} {Z_{m'}^{\left( {l - 1} \right)}} } \right) \cdot {P_m}\left( {{\mathcal{M}_{i - 1}}} \right)},
\end{equation}
\begin{align}
\label{cardinality_de2}
Z_m^{\left( l \right)} \simeq & { \varepsilon \cdot \sum\limits_{i = 2}^{{d_v}} {{\lambda _i}}  \cdot \sum\limits_{\mathcal{M}_{i-1}} {\left( {\prod\limits_{m' \in \mathcal{M}_{i-1}} {W_{m'}^{\left( l \right)}} } \right) \cdot {Q_m}\left( {{\mathcal{M}_{i - 1}}} \right)}} 
\\\nonumber
&+ (1-\varepsilon)  \cdot \delta \left[ {m - 1} \right],
\end{align}
where $\delta \left[ {m } \right]$ is the discrete Dirac delta function. The summation over $\mathcal{M}_{i-1}$ is understood over the ordered lists of $i-1$ elements taken from the set of possible incoming message-set cardinalities. This set is $\left\{ {1,2,...,M} \right\}$ for incoming VTC and $\left\{ {1,2,...,q} \right\}$ for incoming CTV message-set cardinalities. The initial conditions are $Z_1^{\left( 0 \right)} = 1 - \varepsilon ,Z_M^{\left( 0 \right)} = \varepsilon $ and $Z_m^{\left( 0 \right)} = 0$ for $m \ne 1, M$. The asymptotic probability of decoding failure at iteration $l$ is the probability of a VTC message-set cardinality larger than $1$ at iteration $l$:
\begin{equation}
\label{Perror_cardinality}
p_e^{(l)} = \sum\limits_{m = 2}^q {Z_m^{\left( l \right)}}  = 1 - Z_1^{\left( l \right)}.
\end{equation}

We note here that in our experiments the probability of decoding failure calculated using \eqref{Perror_cardinality} is virtually the same as \eqref{Perror_exact} even for small $q$ and check-node degree values, such that the cardinality-based equations can be safely used for QPEC performance evaluation. However, though we moved from $\mathcal{O}(2^q)$ possible message sets to $q$ possible message-set cardinalities, we still need efficient ways to calculate $P_m$ and $Q_m$. A straightforward calculation enumerates $\chi_t$ and $\eta_t$ for $\mathcal{O}(2^{2q})$ realizations of message set pairs, which does not quite solve the complexity problem. Thus we devote the remainder of this section and the next section to efficient calculations, bounding, and approximations for $P_m$ and $Q_m$. We begin with providing in Section \ref{PDVTC} an exact closed-form expression for $Q_m$. In Section \ref{pm_approx} we show that finding a closed-form expression for $P_m$ is hard. Therefore, we propose computationally efficient bounds and approximation models for $P_m$. We later use our models and bounds to determine the QPEC decoding threshold and to design good LDPC codes.

\subsection{Formula for $Q_m$}
\label{PDVTC}

$Q_m(\mathcal{M}_{\bar d})$ is the probability of an intersection of cardinality $m$ between CTV messages with cardinalities taken from $\mathcal{M}_{\bar d}$ and a channel information set of cardinality $M$, where message sets of the same cardinality are equiprobable. Define $\mathcal{M}_{d}$ to contain the cardinalities in $\mathcal{M}_{\bar d}$ together with the channel information set cardinality $M$ and $\mu$ to be the smallest cardinality in $\mathcal{M}_{d}$, i.e. $\mu \buildrel \Delta \over = {\min _{m' \in {\mathcal{M}_{d}}}}m'$. In the following, we find the number of ways to realize the sets in $\mathcal{M}_{d}$ such that their intersection is of cardinality $m$, and later take into account the presence of the correct symbol in each set. We begin with the following lemma.

\begin{lemma}
\label{Km_calculation}
\emph{(Number of ways to get an intersection of cardinality $m$)}
Consider $d$ message sets whose cardinalities are in $\mathcal{M}_{d}$. The number of ways to realize the sets such that their intersection is of cardinality $m$ ($m=0,1,...,\mu)$ is:
\begin{equation}
\label{Im_def}
K_m (\mathcal{M}_d; q) = \sum\limits_{s = 0}^{\mu  - m} {{{\left( { - 1} \right)}^s} \cdot } {\upsilon _{m + s}} \cdot {m+s \choose m},
\end{equation}
where
\begin{equation}
{\upsilon _{m + s}} \buildrel \Delta \over = {q \choose m+s} \cdot \prod\limits_{m' \in \mathcal{M}_d} {q-(m+s) \choose m'-(m+s)}.
\label{ni}
\end{equation}
\end{lemma}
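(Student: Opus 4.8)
The plan is to count, by inclusion–exclusion, the number of ways to choose $d$ subsets of an ambient set of size $q$, with prescribed cardinalities given by $\mathcal{M}_d$, whose common intersection has size exactly $m$. The natural intermediate quantity is the number of configurations whose intersection \emph{contains} a fixed $j$-element set; summing over all choices of that $j$-set gives an "at least"-type count, and then Möbius inversion on the subset lattice (equivalently, the alternating-sign binomial identity) converts "contains a specified $j$-set, summed over $j$-sets" into "equals exactly $m$."

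The key steps, in order, are as follows. First, I would fix a candidate intersection set $T$ with $|T| = j$ for some $j$ with $m \le j \le \mu$ (it cannot exceed $\mu$, the smallest prescribed cardinality, since the intersection is contained in every chosen set). For each message set of prescribed size $m' \in \mathcal{M}_d$, the number of ways to choose that set so that it contains $T$ is $\binom{q-j}{m'-j}$: we must include all $j$ symbols of $T$ and then pick the remaining $m'-j$ symbols from the $q-j$ symbols outside $T$. Multiplying over all $m' \in \mathcal{M}_d$ and then over the $\binom{q}{j}$ choices of $T$ gives exactly $\upsilon_j$ as defined in \eqref{ni}; this is the number of (ordered) tuples of sets whose intersection contains \emph{some} specified $j$-set, counted with multiplicity equal to the number of $j$-subsets of the actual intersection. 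Hence if $N_{\ge i}$ denotes the number of tuples whose intersection has size $\ge i$ and $N_{= i}$ the number with size exactly $i$, we have the identity $\upsilon_j = \sum_{i \ge j} \binom{i}{j} N_{=i}$, valid for $0 \le j \le \mu$. Second, I would invert this relation. The binomial transform $\upsilon_j = \sum_{i\ge j}\binom{i}{j} N_{=i}$ is inverted by $N_{=m} = \sum_{j \ge m} (-1)^{j-m}\binom{j}{m}\upsilon_j$, which, re-indexed by $s = j - m$ and truncated at $s = \mu - m$ (since $\upsilon_j = 0$ for $j > \mu$, as $\binom{q-j}{m'-j} = 0$ once $j > m' $ for the minimal $m' = \mu$), yields precisely \eqref{Im_def} with $K_m(\mathcal{M}_d;q) = N_{=m}$. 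Third, I would note this counts \emph{all} tuples of sets of the prescribed sizes — the "presence of the correct symbol in each set," mentioned in the surrounding text, is handled separately afterward (it affects how $K_m$ feeds into $Q_m$, not the combinatorial count $K_m$ itself), so Lemma \ref{Km_calculation} as stated is exactly the raw count $N_{=m}$.

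The one step needing care — and the main obstacle — is the inversion in the second step: I must justify that $\upsilon_j = \sum_{i}\binom{i}{j} N_{=i}$ is inverted by the alternating binomial sum. This is the standard finite-difference / binomial inversion identity, which follows from the orthogonality relation $\sum_{j=m}^{i} (-1)^{j-m}\binom{i}{j}\binom{j}{m} = \delta_{i,m}$ (provable by the subset-of-a-subset identity $\binom{i}{j}\binom{j}{m} = \binom{i}{m}\binom{i-m}{j-m}$ together with $\sum_{k}(-1)^k\binom{n}{k} = 0$ for $n \ge 1$). Substituting the claimed formula for $N_{=m}$ back into $\sum_i \binom{i}{j} N_{=i}$ and swapping the order of summation collapses it to $\upsilon_j$ via this orthogonality, which closes the argument. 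A minor additional point to verify is the truncation bound: $\upsilon_{m+s} = 0$ whenever $m+s > \mu$ because then $\binom{q-(m+s)}{\mu-(m+s)} = 0$, so extending the sum in \eqref{Im_def} beyond $s = \mu - m$ adds nothing, confirming the stated range $m = 0,1,\dots,\mu$ and the upper limit $\mu - m$.
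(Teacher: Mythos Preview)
Your proof is correct and follows essentially the same inclusion--exclusion approach as the paper: both identify $\upsilon_j$ as the count of tuples whose intersection contains a fixed $j$-set (summed over all such $j$-sets), and then invert to obtain the exact count $K_m$. The paper carries out the inversion by hand, working downward from $m=\mu$ and correcting for overcounting at each step, whereas you invoke binomial inversion directly via the orthogonality relation; the underlying reasoning and the resulting formula coincide.
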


\begin{proof}
Consider a fixed subset of $\mu$ elements taken from a set of $q$ elements. The number of ways to choose $d$ subsets with cardinalities in $\mathcal{M}_d$ such that they all contain the subset of $\mu$ elements is $\prod\limits_{m' \in \mathcal{M}_d} {q-\mu \choose m'-\mu}$, as we are free to choose only $m' - \mu $ elements for each subset of cardinality $m'$. Taking into account the number of ways to choose a subset of $\mu$ elements, which is ${q \choose \mu}$, we have 
\begin{equation}
K_{\mu} = {q \choose \mu} \cdot \prod\limits_{m' \in \mathcal{M}_d} {q-\mu \choose m'-\mu}
 = \upsilon_\mu 
\end{equation}
ways to choose the subsets such that their intersection is of cardinality $\mu$. To find $K_m$ for $m=\mu-1$, we proceed as follows. The number of ways to choose the subsets such that they contain a fixed subset of $\mu -1$ elements is $\prod\limits_{m' \in \mathcal{M}_d}{q-(\mu-1) \choose m'-(\mu-1)}$. However, the subsets may also contain a subset of cardinality $\mu$ such that the fixed subset of cardinality $\mu-1$ is its subset, resulting in overcounting. Since there are ${\mu \choose \mu-1}=\mu$ sets of cardinality $\mu-1$ contained in a set of cardinality $\mu$, we correct for overcounting as follows:
\begin{align}
{K_{\mu  - 1}} = & {q \choose \mu-1} \cdot \prod\limits_{m' \in \mathcal{M}_d} {q-(\mu-1) \choose m'-(\mu-1)}  - \mu  \cdot {\upsilon _\mu }
\\\nonumber = &{{\upsilon _{\mu  - 1}} - \mu  \cdot {\upsilon _\mu }}.
\end{align}
Moving to $\mu -2$, we first count sets of cardinality $\mu-2$ with $\upsilon_{\mu-2}$ and  then subtract ${\mu-1 \choose \mu-2} \cdot {\upsilon _{\mu  - 1}}$ sets to account for sets of cardinality $\mu-1$. However, we now over-correct some sets of cardinality $\mu$. We account for that by considering the ${\mu \choose \mu-2}$ sets of cardinality $\mu-2$ contained in a set of cardinality $\mu$ to obtain:
\begin{equation}
{K_{\mu  - 2}} = \upsilon_{\mu-2} - {\mu-1 \choose \mu-2}  \cdot \upsilon_{\mu-1} + {\mu \choose \mu-2} \cdot \upsilon_{\mu}.
\end{equation}
Continuing in the same fashion (essentially, we use the inclusion-exclusion principle), we get:
\begin{equation}
{K_{\mu  - t}} = \sum\limits_{i = 0}^t {{{\left( { - 1} \right)}^i} \cdot } {\upsilon _{\mu  - t + i}} \cdot {\mu-t+i \choose \mu-t},
\label{Kmu}
\end{equation}
for $t=0,1,...,\mu$. Index shifting leads to the desired result.
\qed
\end{proof}
We are now ready to provide a formula for $Q_m$. Lets us denote by $\mathcal{M}_d -1$ the ordered list obtained by subtracting $1$ from each number (set cardinality) in $\mathcal{M}_d$.

\begin{theorem}
\emph{(Formula for $Q_m$)}
\begin{equation}
\label{Qm_formula}
{Q_m}\left( {{\mathcal{M}_{\bar d}}} \right) = \left\{ {\begin{array}{*{20}{c}}
{\frac{{{K_{m - 1}}(\mathcal{M}_d -1;q-1)}}{{\prod\limits_{m' \in \mathcal{M}_d} {q-1 \choose m'-1} }},}&{{\rm{if \hspace{5pt}}}\mu   > 1}\\
{\delta \left[ {m - 1} \right],}&{{\rm{otherwise.}}}
\end{array}} \right.
\end{equation}
\end{theorem}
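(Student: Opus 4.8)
The plan is to reduce the claim to Lemma \ref{Km_calculation} by exploiting the fact that \emph{every} set involved contains the correct transmitted symbol, which I will call $x^\ast$: a CTV message always contains $x^\ast$ because the true symbol satisfies the parity-check equation it comes from, and the channel-information set of cardinality $M$ contains $x^\ast$ by the QPEC definition \eqref{tran_matrix}. Consequently the intersection of the $d$ sets with cardinalities in $\mathcal{M}_d$ is never empty (so $m \ge 1$), and it has cardinality $m$ precisely when the intersection of the ``reduced'' sets --- each obtained by deleting $x^\ast$ --- has cardinality $m-1$.

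First I would dispose of the degenerate case $\mu = 1$: then some set in $\mathcal{M}_d$ equals $\{x^\ast\}$, which forces the intersection to be $\{x^\ast\}$, i.e. $m=1$ deterministically, giving $Q_m = \delta[m-1]$ as claimed.

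For $\mu > 1$ I would make the probability model explicit. Under the equal-cardinality-equiprobability assumption (together with the density-evolution independence assumption), each set of cardinality $m' \in \mathcal{M}_d$ is uniform and independent over the $\binom{q-1}{m'-1}$ subsets of $\mathcal{X}$ of size $m'$ that contain $x^\ast$; hence there are $\prod_{m' \in \mathcal{M}_d} \binom{q-1}{m'-1}$ equally likely configurations, which is the denominator of \eqref{Qm_formula}. Deleting $x^\ast$ from every set is a bijection from these configurations onto the family of $d$-tuples of subsets of the $(q-1)$-element ground set $\mathcal{X}\setminus\{x^\ast\}$ with cardinalities $\mathcal{M}_d - 1$ --- exactly the setting of Lemma \ref{Km_calculation} with $q$ replaced by $q-1$ and cardinality list $\mathcal{M}_d - 1$. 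By the reduction in the first paragraph, the configurations whose original intersection has size $m$ are precisely the reduced configurations whose intersection has size $m-1$, and by Lemma \ref{Km_calculation} there are $K_{m-1}(\mathcal{M}_d - 1; q-1)$ of them. Dividing favorable by total yields \eqref{Qm_formula}.

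The bulk of the combinatorics --- the inclusion--exclusion count --- is already carried out in Lemma \ref{Km_calculation}, so the only delicate point is the reduction step itself: verifying that conditioning on ``all sets contain $x^\ast$'' genuinely turns the equiprobability assumption into the uniform distribution over $x^\ast$-containing subsets of each prescribed cardinality, and that $x^\ast$-deletion is a measure-preserving bijection onto the unconstrained uniform model for which the lemma is stated. Once that is in place, the index shifts $m \mapsto m-1$ and $q \mapsto q-1$ are purely mechanical.
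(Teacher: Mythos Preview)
Your proposal is correct and follows essentially the same approach as the paper's proof: both exploit that the correct symbol lies in every set, delete it to reduce to the unconstrained counting problem of Lemma~\ref{Km_calculation} with parameters shifted by one, and normalize by the total number of configurations. Your write-up is simply a more explicit and careful version of the paper's terse argument, with the bijection and the degenerate case $\mu=1$ spelled out in full.
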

\begin{proof}
We use $K_{m-1}$, $\mathcal{M}_d -1$ and $q-1$ as we can choose effectively $m'-1$ elements for each subset of cardinality $m'$, as the correct symbol appears in the subsets. We then normalize by the number of subsets with cardinalities ${m' - 1}$ taken from a set of $q-1$ elements to obtain a probability distribution. Note that when $\mu=1$ the intersection is necessarily of cardinality $1$, such that that $Q_1 = 1$. \qed
\end{proof}

\section{Bounds and approximations for $P_m$}
\label{pm_approx}

$P_m(\mathcal{M}_{\bar d})$ in Equation \eqref{cardinality_de1} is the probability that the sumset of the sets with cardinalities in $\mathcal{M}_{\bar d}$ is of cardinality $m$, where sets of the same cardinality are equiprobable and the edge labels are uniformly distributed. Considering all possible realizations of the messages becomes intractable as the field size or the node degree increase. The major reason for the difficulty in calculating $P_m$ (unlike $Q_m$) is that it involves GF($q$) arithmetic. Thus, finding a closed-form expression for $P_m$ is hard, see e.g. the discussion on sumsets in \cite{Tao}, \cite{Eliahou199812}, \cite{croot}. Because of that, we seek instead efficient bounds and approximations for $P_m$. Let $\mathcal{I}_{\bar d}$ contain indices of arbitrary message sets whose cardinalities are in $\mathcal{M}_{\bar d}$. Denote $\kappa \buildrel \Delta \over = {\max _{m' \in {\mathcal{M}_{\bar d}}}}m'$ as the maximal number (set cardinality) in $\mathcal{M}_{\bar d}$. In addition, denote $N \buildrel \Delta \over = \prod\limits_{m' \in {\mathcal{M}_{\bar d}}} {m'} $ as the number of sums in the calculation of $\sum\limits_{j \in {{\cal I}_{\bar d}}} {{{\cal S}_j}}$.

\begin{example}
Assume that $q=4$ and $\bar{d}=2$. If $\mathcal{M}_{\bar d} = \left\{ {2,3} \right\}$, then the first element in $\mathcal{I}_{\bar d}$ can be between $5$ and $10$, and the second element can be between $11$ and $14$.
\end{example}

\subsection{Upper and lower bounds on $P_m$ using additive combinatorics}
\label{pm_bounds}

In this subsection we derive bounds on the cardinality of the sumset $\sum\limits_{j \in {{\cal I}_{\bar d}}} {{{\cal S}_j}}$. These bounds will be a function of the message-set cardinalities $\mathcal{M}_{\bar d}$, such that they are universal for all realizations of sets adhering to the cardinalities in $\mathcal{M}_{\bar d}$. We begin with simple lower and upper bounds.
\begin{lemma}
\label{simple_bounds}
\emph{(Simple bounds on a sumset cardinality \cite{Tao})}
\begin{equation}
\kappa \le \left| {\sum\limits_{j \in {\mathcal{I}_{\bar d}}} {{\mathcal{S}_j}} } \right| \le \min \left( q,N \right).
\end{equation}
\end{lemma}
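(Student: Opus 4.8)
The plan is to establish the two inequalities separately, each by an elementary counting argument; since the statement is attributed to \cite{Tao}, no genuine input from additive combinatorics is needed at this point, only two set-theoretic observations.

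For the lower bound $\kappa \le \left| \sum_{j \in \mathcal{I}_{\bar d}} \mathcal{S}_j \right|$, I would single out an index $k \in \mathcal{I}_{\bar d}$ attaining $|\mathcal{S}_k| = \kappa$, fix an arbitrary element $s_j \in \mathcal{S}_j$ for every $j \ne k$, and note that $\left\{ \sum_{j \ne k} s_j \right\} + \mathcal{S}_k$ is contained in the full sumset $\sum_{j \in \mathcal{I}_{\bar d}} \mathcal{S}_j$. Translation by the fixed field element $\sum_{j \ne k} s_j$ is a bijection of $\mathrm{GF}(q)$ (since $(\mathrm{GF}(q),+)$ is a group), so this subset has cardinality exactly $|\mathcal{S}_k| = \kappa$, and hence the whole sumset has at least $\kappa$ elements.

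For the upper bound, two observations suffice. First, the sumset is by definition a subset of $\mathrm{GF}(q)$, hence has at most $q$ elements. Second, the sumset is the image of the addition map $(s_j)_{j \in \mathcal{I}_{\bar d}} \mapsto \sum_{j} s_j$ whose domain is the Cartesian product $\prod_{j \in \mathcal{I}_{\bar d}} \mathcal{S}_j$ of size $\prod_{j} |\mathcal{S}_j| = N$, so its image has at most $N$ elements. Combining the two gives $\left| \sum_{j \in \mathcal{I}_{\bar d}} \mathcal{S}_j \right| \le \min(q, N)$.

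There is essentially no obstacle: both directions are one-line facts. The only points requiring any care are that all the sets $\mathcal{S}_j$ are non-empty (so that the fixed elements $s_j$ exist and $N \ge 1$), which holds here because message sets are always non-empty, and the fact that translation in $\mathrm{GF}(q)$ preserves cardinality, which is immediate.
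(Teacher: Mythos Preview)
Your argument is correct. The paper does not actually supply a proof of this lemma; it simply states the bounds and attributes them to \cite{Tao}, so there is no in-paper proof to compare against. Your elementary justification---translation by a fixed element to get the lower bound, and bounding the image of the addition map by both $|\mathrm{GF}(q)|$ and the size of the Cartesian product for the upper bound---is exactly the standard one-line reasoning behind these ``trivial'' sumset bounds and is entirely adequate here.
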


The following lemma provides a sufficient condition for attaining the maximal sumset cardinality $q$.

\begin{lemma}
\label{sct}
\emph{(Sufficient condition for the sumset of cardinality $q$ \cite{Tao})} If there are $m,m' \in \mathcal{M}_{\bar d}$ (where $m$ and $m'$ are taken from two different positions in $\mathcal{M}_{\bar d}$) such that $m+m' > q$, then ${\left| {\sum\limits_{j \in \mathcal{I}_{\bar d}} {{\mathcal{S}_j}} } \right| = q}$.
\end{lemma}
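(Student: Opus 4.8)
The plan is to prove the sufficient condition stated in Lemma~\ref{sct} by an averaging / pigeonhole argument on a single coordinate pair of the sumset, which reduces the problem to the familiar fact that two subsets $A, B$ of an abelian group $G$ with $|A| + |B| > |G|$ must satisfy $A + B = G$. First I would isolate the two message sets in question: say $\mathcal{S}_j$ with $|\mathcal{S}_j| = m$ and $\mathcal{S}_k$ with $|\mathcal{S}_k| = m'$ sit at two distinct positions in $\mathcal{I}_{\bar d}$, and write the full sumset as $\sum_{i \in \mathcal{I}_{\bar d}} \mathcal{S}_i = \mathcal{S}_j + \mathcal{S}_k + \mathcal{R}$, where $\mathcal{R}$ is the (nonempty) sumset of the remaining sets; since a sumset of nonempty sets is nonempty, we may pick any fixed $r \in \mathcal{R}$.

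Next, the key observation: because $\mathcal{S}_j$ and $\mathcal{S}_k$ are subsets of the additive group of $\mathrm{GF}(q)$, which has order $q$, and $|\mathcal{S}_j| + |\mathcal{S}_k| = m + m' > q$ by hypothesis, the classical Cauchy--Davenport-style covering argument applies: for every target $g \in \mathrm{GF}(q)$, the sets $\mathcal{S}_j$ and $g - \mathcal{S}_k = \{ g - b : b \in \mathcal{S}_k \}$ both have cardinalities $m$ and $m'$ inside a group of size $q$, so their sizes sum to more than $q$ and they cannot be disjoint; hence there exist $a \in \mathcal{S}_j$, $b \in \mathcal{S}_k$ with $a + b = g$. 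This shows $\mathcal{S}_j + \mathcal{S}_k = \mathrm{GF}(q)$. Then $\mathcal{S}_j + \mathcal{S}_k + \{r\} = \mathrm{GF}(q) + \{r\} = \mathrm{GF}(q)$, and since the whole sumset contains $\mathcal{S}_j + \mathcal{S}_k + \{r\}$ (translate-and-add by the remaining coordinates only enlarges or preserves a set that is already everything), we conclude $\left| \sum_{i \in \mathcal{I}_{\bar d}} \mathcal{S}_i \right| = q$, which is also the trivial upper bound from Lemma~\ref{simple_bounds}.

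I do not expect a genuine obstacle here; this is essentially a one-line consequence of the pigeonhole principle in a finite abelian group once the bookkeeping of ``two different positions'' and the nonemptiness of the residual sumset $\mathcal{R}$ is handled cleanly. The only mild subtlety worth stating explicitly is that $m$ and $m'$ must come from \emph{different} positions in the ordered list $\mathcal{I}_{\bar d}$ — otherwise one would be adding a single set to itself, and $\mathcal{S} + \mathcal{S}$ need not equal $\mathrm{GF}(q)$ even when $2|\mathcal{S}| > q$ (e.g.\ a subgroup-like structured set) — but the hypothesis already guards against this, so the argument goes through verbatim. If one wants to avoid even citing Cauchy--Davenport, the disjointness-of-$\mathcal{S}_j$-and-$(g-\mathcal{S}_k)$ counting argument above is fully self-contained and is the route I would actually write out.
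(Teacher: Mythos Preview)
Your argument is correct and complete: the pigeonhole step ``$|\mathcal{S}_j| + |g - \mathcal{S}_k| > q$ forces $\mathcal{S}_j \cap (g - \mathcal{S}_k) \neq \emptyset$'' is exactly the right self-contained route, and the bookkeeping with the residual sumset $\mathcal{R}$ is handled cleanly. The paper itself does not prove this lemma at all --- it is stated with a citation to Tao's additive-combinatorics text and used as a black box --- so your write-up actually supplies more than the paper does.

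One minor correction to your closing remark: your claim that ``$\mathcal{S} + \mathcal{S}$ need not equal $\mathrm{GF}(q)$ even when $2|\mathcal{S}| > q$'' is false. The very pigeonhole argument you gave works verbatim with $\mathcal{S}_j = \mathcal{S}_k = \mathcal{S}$: for any $g$, the sets $\mathcal{S}$ and $g - \mathcal{S}$ have total size $2|\mathcal{S}| > q$ and hence intersect, so $\mathcal{S} + \mathcal{S} = \mathrm{GF}(q)$. The genuine reason the hypothesis requires two \emph{different positions} is not that the sumset $\mathcal{S} + \mathcal{S}$ could fail to cover, but simply that a single position contributes only one copy of $\mathcal{S}$ to the overall sumset $\sum_j \mathcal{S}_j$, not $\mathcal{S} + \mathcal{S}$. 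This does not affect your proof, only the explanatory aside.
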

For later use, we say that the {\it q-condition} holds if the condition of Lemma \ref{sct} is satisfied. Note that this condition can be satisfied only if $M>q/2$. We now proceed to obtain improved lower bounds on the sumset cardinality, using the following two theorems.

\begin{theorem}
\label{cauchy_davenport}
\emph{(Cauchy-Davenport Theorem \cite{Tao})} Consider the finite field GF($p$), $p$ prime. Let $\mathcal{S}_a$ and $\mathcal{S}_b$ be two non-empty subsets of GF($p$). Then:
\begin{equation}
\label{CDI}
\left| {\mathcal{S}_a + \mathcal{S}_b} \right| \ge \min \left( {p,\left| \mathcal{S}_a \right| + \left| \mathcal{S}_b \right| - 1} \right).
\end{equation}
\end{theorem}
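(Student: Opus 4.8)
The plan is to prove the Cauchy--Davenport theorem by the combinatorial method, which is the standard route for a self-contained treatment. First I would dispose of the trivial case: if $|\mathcal{S}_a| + |\mathcal{S}_b| \ge p+1$, then by a pigeonhole argument the translates $a + \mathcal{S}_b$ (for $a \in \mathcal{S}_a$) and $\mathcal{S}_a$ must overlap for every element of GF($p$), so $\mathcal{S}_a + \mathcal{S}_b = \text{GF}(p)$ and the bound holds with equality against $p$. So the work is in the case $|\mathcal{S}_a| + |\mathcal{S}_b| \le p$, where we must show $|\mathcal{S}_a + \mathcal{S}_b| \ge |\mathcal{S}_a| + |\mathcal{S}_b| - 1$.

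For the main case I would use the $e$-transform (Dyson transform) argument. Assume without loss of generality $1 \le |\mathcal{S}_a| \le |\mathcal{S}_b|$, and induct on $|\mathcal{S}_a|$. The base case $|\mathcal{S}_a| = 1$ is immediate since $\mathcal{S}_a + \mathcal{S}_b$ is then a translate of $\mathcal{S}_b$. For the inductive step, pick two distinct elements of $\mathcal{S}_a$; since $p$ is prime, their difference generates GF($p$) additively, so there is a field element $e$ with $e + s_1 \in \mathcal{S}_b$ but $e + s_2 \notin \mathcal{S}_b$ for some $s_1, s_2 \in \mathcal{S}_a$ (one argues that not all translates $e + \mathcal{S}_a$ can be contained in or disjoint from $\mathcal{S}_b$ unless $\mathcal{S}_b$ is a union of cosets of the subgroup generated by $\mathcal{S}_a - \mathcal{S}_a$, which is all of GF($p$)). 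Then set $\mathcal{S}_a' = \mathcal{S}_a \cap (\mathcal{S}_b - e)$ and $\mathcal{S}_b' = \mathcal{S}_b \cup (\mathcal{S}_a + e)$. One checks that $\mathcal{S}_a' + \mathcal{S}_b' \subseteq \mathcal{S}_a + \mathcal{S}_b$, that $|\mathcal{S}_a'| + |\mathcal{S}_b'| = |\mathcal{S}_a| + |\mathcal{S}_b|$ (what $\mathcal{S}_a'$ loses, $\mathcal{S}_b'$ gains), and that $1 \le |\mathcal{S}_a'| < |\mathcal{S}_a|$, so the inductive hypothesis applies to the pair $(\mathcal{S}_a', \mathcal{S}_b')$ and yields $|\mathcal{S}_a + \mathcal{S}_b| \ge |\mathcal{S}_a' + \mathcal{S}_b'| \ge |\mathcal{S}_a'| + |\mathcal{S}_b'| - 1 = |\mathcal{S}_a| + |\mathcal{S}_b| - 1$.

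The step I expect to be the main obstacle is establishing the existence of a suitable $e$ in the inductive step: one must argue carefully that $\mathcal{S}_a'$ is nonempty (so the induction is on a valid smaller pair) and strictly smaller than $\mathcal{S}_a$, and this is exactly where primality of $p$ is used — the difference set $\mathcal{S}_a - \mathcal{S}_a$, having a nonzero element, generates the whole group GF($p$), so $\mathcal{S}_b$ cannot be a nontrivial union of cosets of $\langle \mathcal{S}_a - \mathcal{S}_a \rangle$. Once this combinatorial lemma is in place, the bookkeeping on cardinalities and the verification $\mathcal{S}_a' + \mathcal{S}_b' \subseteq \mathcal{S}_a + \mathcal{S}_b$ are routine. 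An alternative I would mention is the polynomial method (Combinatorial Nullstellensatz): assuming $|\mathcal{S}_a + \mathcal{S}_b| \le |\mathcal{S}_a| + |\mathcal{S}_b| - 2$, the polynomial $\prod_{c \in \mathcal{S}_a + \mathcal{S}_b}(x + y - c)$ vanishes on $\mathcal{S}_a \times \mathcal{S}_b$, but its coefficient of $x^{|\mathcal{S}_a|-1} y^{|\mathcal{S}_b|-1}$ is the nonzero binomial coefficient $\binom{|\mathcal{S}_a| + |\mathcal{S}_b| - 2}{|\mathcal{S}_a| - 1} \bmod p$ (nonzero precisely because $|\mathcal{S}_a| + |\mathcal{S}_b| - 2 < p$), contradicting the Nullstellensatz; I would probably relegate this slicker proof to a remark and use the Dyson-transform argument as the main one since it is more elementary.
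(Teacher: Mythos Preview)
Your argument is a correct, standard proof of the Cauchy--Davenport theorem via the Dyson ($e$-transform) method, and your sketched alternative via the Combinatorial Nullstellensatz is also valid. However, there is nothing to compare against: the paper does not prove Theorem~\ref{cauchy_davenport} at all. It is stated as a known result with a citation to Tao and is used only as an ingredient (together with K\'{a}rolyi's theorem) to obtain the sumset bounds of Theorem~\ref{cor_bounds}. So your proposal goes well beyond what the paper provides, supplying a self-contained proof where the paper is content to quote the literature.
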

The following theorem by K\'{a}rolyi provides an extension of the Cauchy-Davenport theorem to finite groups.

\begin{theorem}
\label{karolyi}
\emph{(K\'{a}rolyi's theorem for finite groups \cite{Karolyi})} Let $\mathcal{S}_a$ and $\mathcal{S}_b$ be two non-empty subsets of a finite group $G$. Denote by $p\left( G \right)$ the smallest prime factor of $\left| G \right|$. Then:
\begin{equation}
\left| \mathcal{S}_a + \mathcal{S}_b \right| \ge \min \left( {p\left( G \right),\left| \mathcal{S}_a \right| + \left| \mathcal{S}_b \right| - 1} \right).
\end{equation}
\end{theorem}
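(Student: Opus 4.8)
Write the group operation of $G$ multiplicatively, set $A := \mathcal{S}_a$, $B := \mathcal{S}_b$, and recall from Lagrange's theorem the elementary fact that every nontrivial subgroup of $G$ has order divisible by $p(G)$, hence of size at least $p(G)$. The plan is to adapt the Davenport-transform proof of the Cauchy--Davenport theorem (Theorem~\ref{cauchy_davenport}) to an arbitrary finite group, using this divisibility fact as a substitute for ``reduction modulo the prime $p$''.

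\textbf{Reduction.} It suffices to prove the claim: if $|A| + |B| - 1 \le p(G)$, then $|AB| \ge |A| + |B| - 1$. Granting this, the theorem follows: if $|A| \ge p(G)$ then $|AB| \ge |A| \ge p(G)$ (since $a \mapsto ab$ is injective for any fixed $b \in B$, so $|AB| \ge |Ab| = |A|$); and if $|A| < p(G)$ while $|A| + |B| - 1 > p(G)$, choose $B' \subseteq B$ with $|B'| = p(G) + 1 - |A|$ (so $2 \le |B'| \le |B|$) and apply the claim to $(A, B')$ to get $|AB| \ge |AB'| \ge |A| + |B'| - 1 = p(G)$.

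\textbf{Induction step.} Prove the claim by induction on $|B|$; the cases $|B| = 1$ and $|A| = 1$ are immediate. For the inductive step assume $|A|, |B| \ge 2$ and suppose, for contradiction, that $|AB| \le |A| + |B| - 2$. Apply a Davenport-type transform: for a suitable $g \in G \setminus \{e\}$ obtained from the difference set $A^{-1}A$, form a new pair $(A', B')$ with $A'B' \subseteq AB$, $|A'| + |B'| = |A| + |B|$, and $0 < |B'| < |B|$. Since $|A'| + |B'| - 1 = |A| + |B| - 1 \le p(G)$ and $|B'| < |B|$, the induction hypothesis yields $|AB| \ge |A'B'| \ge |A'| + |B'| - 1 = |A| + |B| - 1$, contradicting $|AB| \le |A| + |B| - 2$.

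\textbf{The degenerate case and the main difficulty.} What remains is the case in which no admissible $g$ produces a transform with $0 < |B'| < |B|$. Here one shows that some translate of $B$ is a union of cosets of a nontrivial subgroup $H \le G$, with $|B|$ a multiple of $|H|$; by the fact recalled at the outset $|H| \ge p(G)$, so $|B| \ge p(G) \ge |A| + |B| - 1$ forces $|A| \le 1$, a contradiction (equivalently, $AB$ is then itself a union of cosets of $H$, so $|AB| \ge |H| \ge p(G)$). I expect this degenerate case to be the crux: in the abelian setting Kneser's theorem supplies the coset structure directly and the Dyson transform is symmetric in $A$ and $B$, whereas here one must extract the structure by hand and choose the transform one-sidedly so that the inclusion $A'B' \subseteq AB$ survives non-commutativity, relying only on the divisibility of subgroup orders by $p(G)$.
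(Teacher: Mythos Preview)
The paper does not prove this theorem; it is quoted from \cite{Karolyi} as a known result and used only as input to Theorem~\ref{cor_bounds}. There is no in-paper argument to compare against.

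Your sketch has a genuine gap at the transform step. The Dyson $e$-transform $A' = A \cup eB$, $B' = B \cap e^{-1}A$ does preserve $|A'|+|B'|=|A|+|B|$ in any group, but the crucial inclusion $A'B' \subseteq AB$ uses commutativity: for $x = eb \in eB$ and $y \in B \cap e^{-1}A$ one needs $eby \in AB$, whereas the hypothesis $ey \in A$ only yields $eyb \in AB$. You say the transform should be ``chosen one-sidedly'', but you do not exhibit one that simultaneously preserves the product-set inclusion, preserves the cardinality sum, and strictly shrinks $|B|$; no purely combinatorial transform of this kind is known to suffice for arbitrary finite groups. The degenerate-case claim --- that a translate of $B$ is a union of cosets of a nontrivial subgroup --- is exactly what Kneser's theorem delivers in the abelian setting, and absent commutativity you have given no mechanism to produce that structure. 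K\'{a}rolyi's actual proof is quite different: it proceeds by induction on $|G|$ through quotients by normal subgroups, treating solvable groups directly and invoking deep structural input (the Feit--Thompson odd-order theorem) to cover the general case. For the paper's application the relevant group is the additive group of $\mathrm{GF}(q)$, which is elementary abelian, and there your outline (or a direct appeal to Kneser) does go through; but as a proof of the theorem as stated, for arbitrary finite $G$, the argument is incomplete at precisely the point you yourself flag as ``the crux''.
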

This theorem can be used for extending the inequality \eqref{CDI} to extension fields, as we have in the following theorem.

\begin{theorem}
\label{cor_bounds}
\emph{(Improved sumset cardinality bounds)} Denote by $p$ the prime factor of $q$. Then:
\begin{align}
\label{sumset_improved_bounds}
\max \left( {\kappa ,\min \left( {p,\sum\limits_{m' \in {\mathcal{M}_{\bar d}}} {m' - {\bar d} + 1} } \right)} \right) & \le \left| {\sum\limits_{j \in {{\cal I}_{\bar d}}} {{{\cal S}_j}} } \right| 
\\\nonumber
& \le \min \left( {q,N} \right).
\end{align}
\end{theorem}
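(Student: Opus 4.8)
The upper bound $\min(q,N)$ and the lower bound $\kappa$ are precisely the content of Lemma~\ref{simple_bounds}, so the only new ingredient is the second lower bound $\min\!\left(p,\sum_{m'\in\mathcal{M}_{\bar d}}m'-\bar d+1\right)$; once that is established, taking the maximum of the two lower bounds yields the claimed inequality. The plan is to prove this bound by induction on $\bar d$, the number of sets being summed, applying K\'arolyi's theorem (Theorem~\ref{karolyi}) to the additive group $G$ of $\mathrm{GF}(q)$, which has order $q=p^k$, so that $p$ is its unique (hence smallest) prime factor and $p(G)=p$. For $q$ prime one could invoke the Cauchy--Davenport theorem (Theorem~\ref{cauchy_davenport}) directly; K\'arolyi's theorem is what handles general prime powers. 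Throughout, write $m_j=|\mathcal{S}_j|\in\mathcal{M}_{\bar d}$.

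For the base case $\bar d=1$ there is a single set of cardinality $m_1$, and $\min(p,m_1-1+1)=\min(p,m_1)\le m_1$, so the bound is immediate. For the inductive step, fix one index $j_0\in\mathcal{I}_{\bar d}$ and set $\mathcal{T}=\sum_{j\neq j_0}\mathcal{S}_j$, so that $\sum_{j\in\mathcal{I}_{\bar d}}\mathcal{S}_j=\mathcal{T}+\mathcal{S}_{j_0}$ by associativity of the sumset. Theorem~\ref{karolyi} gives $\left|\mathcal{T}+\mathcal{S}_{j_0}\right|\ge\min\!\left(p,|\mathcal{T}|+m_{j_0}-1\right)$. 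Now split on the size of $\mathcal{T}$: if $|\mathcal{T}|\ge p$ then the right-hand side is $\min(p,p+m_{j_0}-1)=p$, which already dominates $\min\!\left(p,\sum_{m'\in\mathcal{M}_{\bar d}}m'-\bar d+1\right)$; otherwise $|\mathcal{T}|<p$, and then the inductive hypothesis forces $|\mathcal{T}|\ge\sum_{j\neq j_0}m_j-(\bar d-1)+1=\sum_{j\neq j_0}m_j-\bar d+2$, whence $|\mathcal{T}|+m_{j_0}-1\ge\sum_{m'\in\mathcal{M}_{\bar d}}m'-\bar d+1$ and monotonicity of $\min$ closes the step. Combining with Lemma~\ref{simple_bounds} yields the full statement.

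I expect the only step needing genuine care is the case analysis on $|\mathcal{T}|$ versus $p$: one must propagate the truncation at $p$ correctly so that ``saturation'' is preserved once it is reached, and note that $\min\!\left(p,\sum_{m'\in\mathcal{M}_{\bar d}}m'-\bar d+1\right)\le p$ so the saturated branch is never weaker than required. A minor remark worth including is that, since multiplication by a nonzero element of $\mathrm{GF}(q)$ is a bijection, scaling the $\mathcal{S}_j$ by the edge labels (as happens in the decoder of Section~\ref{message_passing}) changes neither the individual cardinalities nor the sumset cardinality, so bounding sumsets of arbitrary sets with the prescribed cardinalities $\mathcal{M}_{\bar d}$ is exactly what is needed for the use of this theorem in Equation~\eqref{cardinality_de1}.
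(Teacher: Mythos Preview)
Your proposal is correct and follows essentially the same approach as the paper: invoke Lemma~\ref{simple_bounds} for the upper bound and the $\kappa$ lower bound, then obtain the remaining lower bound by applying K\'arolyi's theorem (Theorem~\ref{karolyi}) to the additive group of $\mathrm{GF}(q)$ and inducting on the number of summed sets. The paper's proof is a one-sentence sketch pointing to exactly these ingredients, so you have simply written out the details (including the saturation case analysis at $p$) that the paper leaves implicit.
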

\begin{proof}
This theorem is proved by Lemma \ref{simple_bounds} and Theorem \ref{karolyi}, followed by induction on the number of subsets (see e.g. \cite{Grynkiewicz} for the proof technique when $q$ is prime).
\qed
\end{proof}
The bounds of Theorem \ref{cor_bounds} are sharp (i.e., there exist subsets $\mathcal{S}_j$ with cardinalities in $\mathcal{M}_{\bar d}$ such that the bounds are attained) \cite{Tao}. We will denote by $B_L$ and $B_U$ the lower and upper bounds of inequality \eqref{sumset_improved_bounds}, respectively. We use these bounds to derive two bounding distributions $P_m^{\left( {{\text{max}}} \right)}$ and $P_m^{\left( {{\text{min}}} \right)}$: the former to bound the output set cardinalities from above, and the latter from below. To get $P_m^{\left( {{\text{max}}} \right)}$, the sumset is assumed as of cardinality $B_U$ with probability $1$, unless the $q$-condition is satisfied. 
\begin{equation}
\label{PmMax}
P_m^{\left( {{\text{max}}} \right)} = \left\{ {\begin{array}{*{20}{c}}
{\delta \left[ {m - q} \right],}&{{\text{if the } q\text{-condition holds}}}\\
{\delta \left[ {m - {B_U}} \right],}&{{\text{otherwise}}}.
\end{array}} \right.
\end{equation}
In a similar manner, $P_m^{\left( {\text{min}} \right)}$ is calculated using the lower bound $B_L$ on the sumset cardinality:
\begin{equation}
\label{PmMin}
P_m^{\left( {\text{min}}\right)} = \left\{ {\begin{array}{*{20}{c}}
{\delta \left[ {m - q} \right],}&{{\text{if the } q\text{-condition holds}}}\\
{\delta \left[ {m - {B_L}} \right],}&{{\text{otherwise}}}.
\end{array}} \right.
\end{equation}
The importance of $P_m^{\left( {\text{max}} \right)}$ resp. $P_m^{\left( {\text{min}} \right)}$ is that using them in the density evolution iteration in place of the true $P_m$ gives a lower resp. upper bound on the asymptotic probability of decoding failure \eqref{Perror_cardinality} calculated using the cardinality-based density-evolution equations.
 
Going beyond the bounds above to a potentially tighter characterization of $P_m$, in the remainder of the section we propose two low-complexity approximation models for $P_m$. We begin with a simple balls-and-bins model, and later refine it with a tighter model. Finally, we compare the bounds above with the proposed approximation models.

\subsection{The balls-and-bins model}
\label{Balls_and_bins}

The major difficulty in calculating $P_m$ exactly is its dependence on the structure of the finite-field arithmetic. Going around this difficulty, we propose a pure-probabilistic approximation of $P_m$ using the \textit{balls-and-bins model} \cite{Mitz}. In this model, balls are placed independently and uniformly at random to bins, where we are usually interested in the distribution of the number of non-empty bins once all the balls were placed. Motivated by the randomness induced by the random edge labels, we propose to consider the $N$ sums in the calculation of the sumset as the balls, and the $q$ elements of GF($q$) as the bins. This way, $P_m$ is modeled as the probability of $m$ non-empty bins after the $N$ balls were placed. As a consequence, the use of GF($q$) arithmetic is not required when the balls-and-bins model is used.

The balls-and-bins model is an \textit{absorbing} Markov process with $q+1$ possible \textit{states}, with state $m$ ($m=0,1,...,q$) corresponding to $m$ non-empty bins out of $q$. The absorbing state is $q$, as once $q$ bins are non-empty the number of non-empty bins cannot change. The $\left( {q + 1} \right) \times \left( {q + 1} \right)$ Markov matrix describing this process takes a simple form, since we can either stay at state $m$ or move to state $m+1$. Denoting the Markov matrix as ${{\bf{\Gamma }}_{{\rm{balls}}}}$, its entries are:
\begin{equation}
\label{gamma_balls}
{\left( {{{\bf{\Gamma }}_{{\rm{balls}}}}} \right)_{m,m}} = \frac{m}{q},{\left( {{{\bf{\Gamma }}_{{\rm{balls}}}}} \right)_{m,m + 1}} = 1 - \frac{m}{q},
\end{equation}
where the remaining entries are zeros. That is, if the current state is $m$, then a ball is placed in a one of the $m$ non-empty bins with probability $m/q$, and is placed in a different bin with probability $1-m/q$. Let us denote by ${{\boldsymbol{g}}^{\left( N \right)}} = \left( {g_0^{\left( N \right)},g_1^{\left( N \right)},...,g_{q}^{\left( N \right)}} \right)$ the probability distribution on the states defined by ${{{\bf{\Gamma }}_{{\text{balls}}}}}$, where $g_m^{\left( N \right)}$ is the probability of state $m$ after the $N$ balls were placed. According to the Markov property, ${{\boldsymbol{g}}^{\left( N \right)}} = {\boldsymbol{g}^{\left( 0 \right)}} \cdot {\bf{\Gamma }}_{{\rm{balls}}}^N$ (where ${\bf{\Gamma }}_{{\rm{balls}}}^N$ is ${{\bf{\Gamma }}_{{\rm{balls}}}}$ raised to power $N$). As ${\boldsymbol{g}}^{(0)}= \left( {1,0,...,0} \right)$ (i.e., the bins are empty at the beginning), ${\boldsymbol{g}}^{\left( N \right)}$ is simply the first row of ${{\bf{\Gamma }}_{\text {balls}}^{N}}$. Finally, using the $q$-condition (Lemma \ref{sct}) and the lower bound $B_L$ (see Section \ref{pm_bounds}), we define the following approximation model for $P_m$:
\begin{equation}
\label{pm_balls}
P_m^{\left( {{\text{balls}}} \right)} = \left\{ {\begin{array}{*{20}{c}}
{0,}&{{\text{if }}m < {B_L}}\\
{\delta \left[ {m - q} \right],}&{{\text{if the }}q{\text{-condition holds}}}\\
{\frac{{g_m^{\left( N \right)}}}{{\sum\limits_{m' = {B_L}}^q {g_{m'}^{\left( N \right)}} }},}&{{\text{otherwise}}}.
\end{array}} \right.
\end{equation}

The expected number of balls required to get into the absorbing state $q$ (when starting at state $0$) is $q\ln q + q\gamma$ (up to $\mathcal{O}\left( {1/\left( {2q} \right)} \right)$ terms), where $\gamma \approx 0.577$ is the {\it Euler-Mascheroni constant} \cite{Mitz}. That is, all the bins will be non-empty on average when $N \approx q\ln q + q \cdot \gamma$, which can be thought as the probabilistic extension of the $q$-condition to the balls-and-bins model. For such $N$ values, the sumset cardinality approximated using the balls-and-bins model is expected to be $q$ and ${{\boldsymbol{g}}^{\left( N \right)}}$ degenerates to the absorbing distribution $(0,\ldots,0,1)$. Therefore, ${{\bf{\Gamma }}_{\text {balls}}^{N}}$ should be calculated in practice for values of $N$ up to approximately $q\ln q + q\gamma$, even for high-degree check nodes.

\subsection{The union model}

In the previous sub-section, we modeled $P_m$ using the balls-and-bins model, where each ball (which corresponds to an element obtained by a sum within the sumset) is independent of the other balls. In this part, we improve this approximation by exploiting an important property of the $N$ sums within the sumset: they can be divided into $N/\kappa$ sets of $\kappa$ (distinct) elements. This is proved by viewing the sums as generated by one element from the maximal-cardinality subset (of cardinality $\kappa$) and elements from the remaining subsets. This observation leads us to suggest a refined version of the balls-and-bins model, which we term as the \textit{union model}. In this model, the probability of a sumset of cardinality $m$ is modeled as the probability that the union of $N/\kappa$ random sets with cardinality $\kappa$ each results in a set of cardinality $m$. In view as balls-and-bins, it is the probability of $m$ non-empty bins after $\kappa$ groups of $N/\kappa$ balls are placed into the $q$ bins, where the balls in each group are placed uniformly at random into $\kappa$ \textit{distinct} bins.

Let us denote by ${{\boldsymbol{u}}^{\left( N/\kappa \right)}} = \left( {u_0^{\left( N/\kappa \right)},u_1^{\left(N/\kappa \right)},...,u_q^{\left( N/\kappa \right)}} \right)$ the probability distribution on the $q+1$ states after $N/\kappa$ groups of balls were placed into the bins. That is, $u_m^{\left( N/\kappa \right)}$ is the probability of state $m$ after $N/\kappa$ groups of $\kappa$ balls each were placed in the bins according to the union model. The transition probability ${P_{{\text{union}}}}\left( {m \to m'} \right)$ from state $m$ to state $m'$ is equivalent to the probability that the union of a random set of cardinality $m$ with a random set of cardinality $\kappa$ is of cardinality $m'$ (given that the set elements are taken from $q$ elements). To calculate this probability, denote by $\mathcal{A}$ a set of cardinality $m$ and by $\mathcal{B}$ a set of cardinality $\kappa$. We have:
\begin{align}
\label{ix1}
{P_{{\text{union}}}}\left( {m \to m'} \right) &= \Pr \left( {\left| {\mathcal{A} \cup \mathcal{B}} \right| = m'} \right) \\ \nonumber &= \Pr \left( {\left| {\mathcal{A} \cap \mathcal{B}} \right| = m+\kappa - m'} \right),
\end{align}
where we used the inclusion-exclusion principle. Thus, we can equivalently find the probability that the \textit{intersection} of the sets $\mathcal{A}$ and $\mathcal{B}$ is of cardinality $m+\kappa-m'$. Recall that ${K_{m + \kappa - m' }}\left( {m,\kappa } \right)$ (see Lemma \ref{Km_calculation}, $q$ is omitted for brevity) is the number of ways to obtain such an intersection cardinality. Dividing ${K_{m + \kappa - m' }}\left( {m,\kappa} \right)$ by the number of possible realizations of elements in the sets provides the desired probability ${P_{{\text{union}}}}\left( {m \to m'} \right)$. Therefore, the entries of the Markov matrix associated with the union model are:
\begin{equation}
\label{markov_matrix_union}
{\left( {{{\bf{\Gamma }}_{{\text{union}}}}} \right)_{m,m'}} = \frac{{{K_{m + \kappa -m' }}\left( {m,\kappa } \right)}}{{{q \choose m} \cdot {q \choose \kappa}}}.
\end{equation}
It is not hard to check that for $\kappa=1$, ${{{{\bf{\Gamma }}_{{\text{union}}}}} }$ reduces to ${{{{\bf{\Gamma }}_{{\text{balls}}}}}}$ (defined in \eqref{gamma_balls}). As before, the Markov property implies that ${{\boldsymbol{u}}^{\left( N/\kappa \right)}}$ is simply the first row of ${{\bf{\Gamma }}_{\text {union}}^{N/\kappa}}$. Finally, the following approximation for $P_m$ is based on the union model:
\begin{equation}
\label{Pm_union}
P_m^{\left( {{\text{union}}} \right)} = \left\{ {\begin{array}{*{20}{c}}
{0,}&{{\text{if }}m < {B_L}}\\
{\delta \left[ {m - q} \right],}&{{\text{if the }}q{\text{-condition holds}}}\\
{\frac{{u_m^{\left( {N/\kappa } \right)}}}{{\sum\limits_{m' = {B_L}}^q {u_{m'}^{\left( {N/\kappa } \right)}} }},}&{{\text{otherwise}}}
\end{array}} \right.
\end{equation}

\begin{figure}[t!]
\centering
\includegraphics[scale=0.55]{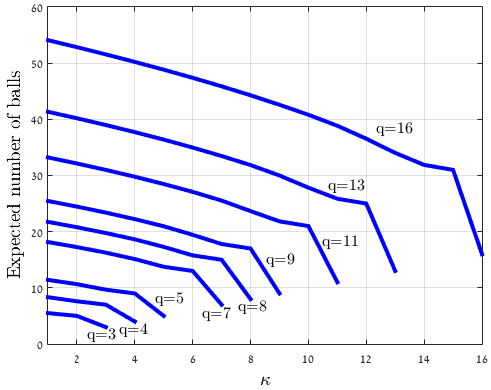}
\caption{The expected number of balls required for getting $q$ non-empty bins in the union model.}
\label{ex_kappa}
\end{figure}

To obtain the expected number of balls required to get into the absorbing state $q$, we use the {\it fundamental matrix} \cite{Grinstead} associated with an absorbing Markov chain. In our case, this matrix is ${\boldsymbol{\Phi} _{{\text{union}}}} = {\left( {{{\bf{I}}_q} - {{\bf{Q}}_{{\text{union}}}}} \right)^{ - 1}}$ where ${{{\mathbf{I}}_{q }}}$ is the identity matrix of dimensions $ q \times q$ and ${{{\bf{Q}}_{{\text{union}}}}}$ is the upper-left $ q \times q$ sub-matrix of ${{{\bf{\Gamma }}_{{\text{union}}}}}$. The expected number of groups of balls required to get into the absorbing state $q$ when starting with state $i_0$ is the $i{_0^{\text {th}}}$ entry of the vector ${{\mathbf{\Phi }}_\text{union}}{\mathbf{1}}$, where ${\mathbf{1}}$ is a column vector whose entries are all $1$ \cite{Grinstead} (to get the expected number of \textit{balls}, we multiply by $\kappa$). In Figure \ref{ex_kappa}, the expected number of balls required for getting from state $0$ to state $q$ is given as a function of $q$ and $\kappa$. As mentioned earlier (see Section \ref{Balls_and_bins}), this expected number can be used for extending the $q$-condition to the union model. Note that for $\kappa=1$ the union model is essentially the balls-and-bins model and we have ${{\bf{\Phi }}_{{\text{union}}}}{\bf{1}}\left( {i = 0} \right) \approx q\ln q + q\gamma$ as we saw earlier. 

\subsection{Comparison of the bounds and approximations}
\label{subsection:comparison}

In this part, we verify the tightness of our approximations by comparing the \textit{decoding threshold} \cite{RU} obtained from the exact and the approximate (cardinality-based) density-evolution equations. The QPEC decoding threshold (for a given degree-distribution pair), denoted $\varepsilon_{\rm th}$, is the maximal partial-erasure probability $\varepsilon$ such that the probability of decoding failure tends to zero. Its operational meaning is the robustness of the iterative decoder to partially-erased codeword symbols, i.e., the fraction of partially-erased codeword symbols that the decoder can tolerate. In Figure \ref{fig_results}, we plot the exact and approximate decoding threshold values (using the bounds and models for $P_m$) for several values of $q$ and $M$ for the regular $(3,6)$ LDPC code ensemble (of rate $1/2$). We note that the exact threshold for $q=16$ is not provided in Figure \ref{fig_results} due to complexity reasons. When $M=q$, the QPEC density-evolution equations are equivalent to the BEC/QEC density-evolution equation (see Section \ref{de_analysis}). In this case, all the models and bounds give the exact threshold, which is $0.429$.

\begin{figure*}[t!]
        \centering
        \begin{subfigure}[t]{0.46\textwidth}
                \includegraphics[scale=0.55]{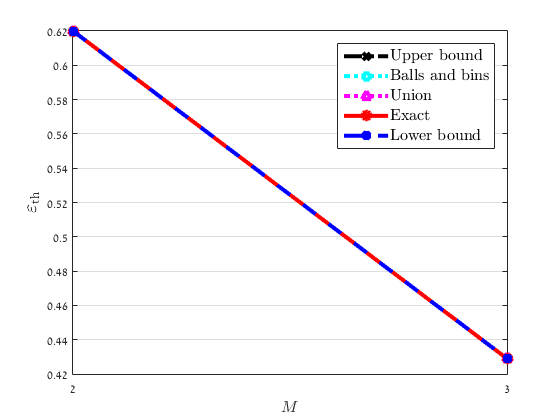}
                \caption{$q=3$}
                \label{th_q3}
        \end{subfigure}%
        ~ 
        \begin{subfigure}[t]{0.46\textwidth}
                \includegraphics[scale=0.55]{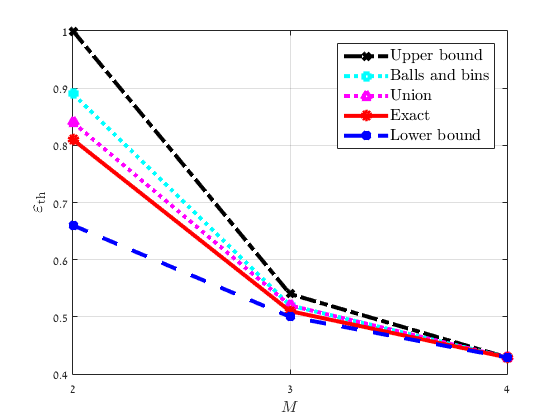}
                \caption{$q=4$}
                \label{th_q4}
        \end{subfigure}
        ~ 
        \begin{subfigure}[t]{0.46\textwidth}
                \includegraphics[scale=0.55]{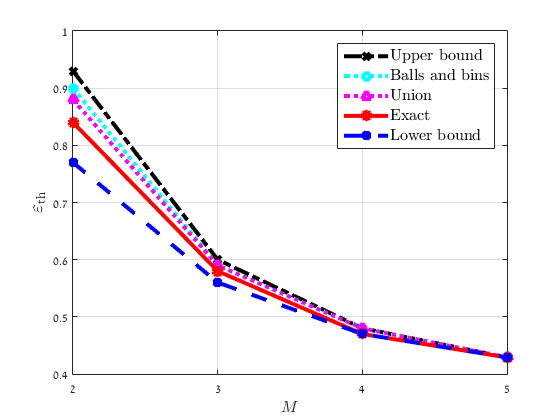}
                \caption{$q=5$}
                \label{th_q5}
        \end{subfigure}%
          ~ 
             \begin{subfigure}[t]{0.46\textwidth}
                \includegraphics[scale=0.55]{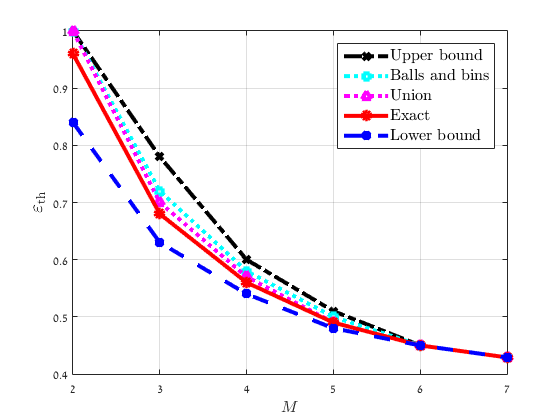}
                \caption{$q=7$}
                \label{th_q7}
        \end{subfigure}
              ~ 
           \begin{subfigure}[t]{0.46\textwidth}
                \includegraphics[scale=0.55]{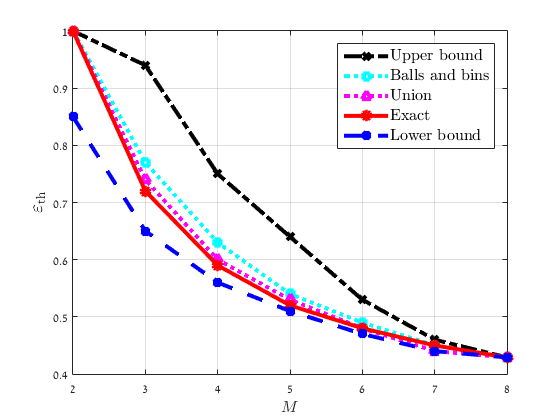}
                \caption{$q=8$}
                \label{th_q8}
        \end{subfigure}%
              ~ 
             \begin{subfigure}[t]{0.46\textwidth}
                \includegraphics[scale=0.55]{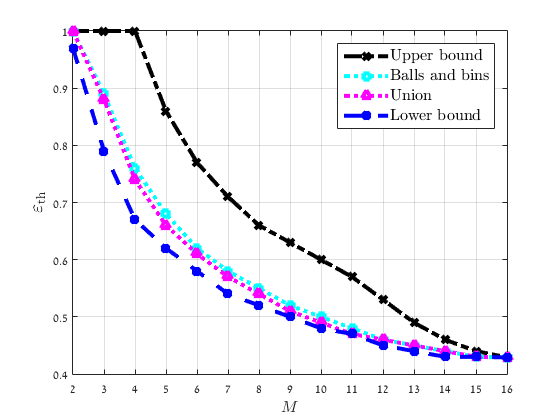}
                \caption{$q=16$}
                \label{th_q16}
        \end{subfigure}
        \caption{The QPEC threshold of the regular ($3,6$) LDPC code ensemble, as a function of $q$ and $M$. \label{fig_results}}
\end{figure*}

According to Figure \ref{fig_results}, the upper bound on the threshold calculated using the cardinality-based equations becomes loose as $q$ increases. This is due to the dependency of the sumset cardinality lower bound (see Theorem \ref{cor_bounds}) on the smallest prime factor of $q$, which is $2$ for binary fields. This makes the threshold upper bound for such fields less tight compared to prime fields. The lower bound is also somewhat loose, as it corresponds to the upper bound on the sumset cardinality that depends on the number of sums in the sumset. However, the bounds on the sumset cardinality are sharp (see Section \ref{pm_bounds}), so it is difficult to improve the bounds shown in Figure \ref{fig_results}. On the other hand, the balls-and-bins model and the union model provide good approximations of the exact threshold, and they are significantly tighter than the bounds. Recall that these approximations can be calculated efficiently, making the models especially attractive for large values of $q$. We deduce from Figure \ref{fig_results} an interesting result: not considering the algebraic structure of the field when using the approximation models leads on average to \textit{smaller} sumset cardinalities compared to the exact calculation of the susmset. If, as we conjecture, the approximation models give indeed upper bounds on the threshold, the uncertainty interval of the exact threshold is relatively small, and it becomes smaller as $M$ approaches $q$.

Figure \ref{fig_results} suggests a potential application of the QPEC to speed up the read process in measurement channels. As an example, suppose that $q=8$ and $M=4$. The decoding threshold in this case is approximately $0.59$. Thus, instead of performing $q-1=7$ comparative measurements to completely read the stored symbol, in $59\%$ of the cells we may perform only one measurement (yielding $q/2=4=M$ uncertainty). In terms of read rate, we now need only $3.46$ measurements on average, improving the read rate by more than $50\%$.

\section{Code Design using Linear Programming}
\label{code_design_considerations}

The design of good LDPC codes for the QPEC using the exact density-evolution equations \eqref{exact_de1}-\eqref{exact_de2} is difficult due to their $\mathcal{O}(2^{2q})$ dimensionality (see Section \ref{de_analysis}). Motivated by the efficiency and the good approximations obtained using the cardinality-based approach, we propose two methods for linear programming (LP) optimization of degree distributions. In Section~\ref{upper_bound_dt}, we present a threshold-oriented iterative optimization process. In Section~\ref{code_design_union}, we use the union model to achieve a target of small decoding-failure probability.

\subsection{Iterative QPEC code design}
\label{upper_bound_dt}

Let us denote by $\varepsilon_{\rm th}^{\rm c}$ the approximate decoding threshold obtained using the cardinality-based equations \eqref{cardinality_de1}-\eqref{cardinality_de2}. In this subsection, we propose two LP optimization methods for obtaining a degree-distribution pair with a desired $\varepsilon_{\rm th}^{\rm c}$ value. Recall that $W_q^{\left( l \right)}$ denotes the probability of a CTV message of cardinality $q$ at iteration $l$ of the decoding process when the cardinality-based equations are used (see Section \ref{de_equations}). Assuming a QPEC with $M>q/2$, a sufficient condition for an outgoing CTV message to be of cardinality $q$ is the $q$-condition, meaning that there is at least one pair of incoming VTC messages whose sum of cardinalities exceeds $q$ (see Lemma \ref{sct}). Therefore, $W_q^{\left( l \right)}$ is bounded from below by the probability that at least two incoming VTC messages are of cardinality $M$, since $2M>q$ when $M > q/2$. As $Z_M^{(l-1)}$ denotes the probability for a VTC message of cardinality $M$ at iteration $l-1$, we get:
\begin{align}
\label{wq}
 W_q^{\left( l \right)} & \ge \sum\limits_{i=2}^{d_c} {{\rho _i}} \sum\limits_{j = 2}^{i - 1} {{i-1 \choose j} {{\left( {Z_M^{\left( {l - 1} \right)}} \right)}^j}{{\left( {1 - Z_M^{\left( {l - 1} \right)}} \right)}^{i - 1 - j}}} \\ \nonumber &=  1 - \rho \left( {1 - {Z_M^{\left( {l - 1} \right)}}} \right) - {Z_M^{\left( {l - 1} \right)}}\rho '\left( {1 - {Z_M^{\left( {l - 1} \right)}}} \right), \nonumber 
\end{align}
where $\rho'(x)$ denotes the derivative of the polynomial $\rho(x)$ with respect to $x$. A sufficient condition for obtaining VTC messages of cardinality $M$ is that a variable node is a partial erasure and all its incoming CTV messages are of cardinality $q$. Therefore,
\begin{equation}
\label{zM}
{Z_M^{\left( l-1 \right)}} \ge \varepsilon \sum\limits_{i=1}^{d_v} {{\lambda _i}} {\left( {W_q^{\left( l-1 \right)}} \right)^{i - 1}} = \varepsilon \lambda \left( {W_q^{\left( l-1\right)}} \right).
\end{equation}
$\lambda(x)$ is an increasing function of $x$ for $x \ge 0$, since $\lambda(x)$ is a polynomial with non-negative coefficients. Note that both $W_q^{(l)}$ and the right-hand side of \eqref{wq} are non-negative as probabilities. Thus, according to \eqref{wq},
\begin{align}
\label{lambda_ge}
\lambda \left( {W_q^{\left( {l - 1} \right)}} \right) \ge \lambda \left( {1 - \rho \left( {1 - Z_M^{\left( {l - 2} \right)}} \right) -  Z_M^{\left( {l - 2} \right)}\rho '\left( {1 - Z_M^{\left( {l - 2} \right)}} \right)} \right).
\end{align}
Finally, by combining \eqref{zM} and \eqref{lambda_ge}, we get:
\begin{equation}
\label{recurrence_ineq}
{Z_M^{\left( l \right)}} \ge \varepsilon \lambda \left( {1 - \rho \left( {1 - {Z_M^{\left( {l - 1} \right)}}} \right) - {Z_M^{\left( {l - 1} \right)}} \rho '\left( {1 - {Z_M^{\left( {l - 1} \right)}}} \right)} \right),
\end{equation}
with the initial condition $Z_M^{(0)}=\varepsilon$.

The inequality in \eqref{recurrence_ineq} applies to any $M > q/2$, for all $q$ (which is prime or prime power), and depends solely on the degree distributions $\lambda(x)$ and $\rho(x)$. Recall that in the case of the BEC, we have an equality rather than an inequality, and without the additional term ${ - Z_M^{\left( {l - 1} \right)}\cdot\rho '\left( {1 - Z_M^{\left( {l - 1} \right)}} \right)}$. This term leads to an upper bound on $\varepsilon_{\rm th}^{\rm c}$, as we will see. Define the function $h_\varepsilon(x) = \varepsilon \lambda \left( {1 - \rho \left( {1 - x} \right) - x\rho '\left( {1 - x} \right)} \right)$ (which is the right-hand side of the inequality in \eqref{recurrence_ineq} with $Z_M^{(l-1)}$ replaced by $x$), and denote by ${h_\varepsilon^l}\left( x \right)$ the $l^{\text{th}}$ composition of $h_\varepsilon(x)$ with itself. We begin with the following lemma.

\begin{lemma}
\label{lem:he}
\begin{enumerate}
\item $Z_M^{\left( l \right)} \ge {h_\varepsilon^l}\left( \varepsilon \right), \hspace{0.2in} l \ge 1$.
\item $\mathop {\lim }\limits_{l \to \infty } h_\varepsilon ^l\left( \varepsilon \right)$ exists and is an increasing function of $\varepsilon$.
\end{enumerate}
\end{lemma}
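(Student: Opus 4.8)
The plan is to establish part (1) by induction on $l$, and then use it together with monotonicity properties of $h_\varepsilon$ to deduce part (2). For part (1), the base case $l=1$ is exactly the inequality \eqref{recurrence_ineq} with the initial condition $Z_M^{(0)}=\varepsilon$, which gives $Z_M^{(1)} \ge h_\varepsilon(\varepsilon) = h_\varepsilon^1(\varepsilon)$. For the inductive step, I would first check that $h_\varepsilon$ is nondecreasing on the relevant interval $[0,1]$: writing $h_\varepsilon(x) = \varepsilon\,\lambda\bigl(1 - \rho(1-x) - x\rho'(1-x)\bigr)$, the inner expression $1 - \rho(1-x) - x\rho'(1-x)$ is precisely the lower bound in \eqref{wq} on $W_q^{(l)}$ expressed as a function of $x$, hence it is a probability and lies in $[0,1]$; moreover its derivative with respect to $x$ is $x\rho''(1-x) \ge 0$ since $\rho$ has nonnegative coefficients, so the inner function is nondecreasing, and composing with the increasing polynomial $\lambda$ (nonnegative coefficients) keeps $h_\varepsilon$ nondecreasing. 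Given this, \eqref{recurrence_ineq} gives $Z_M^{(l)} \ge h_\varepsilon\bigl(Z_M^{(l-1)}\bigr)$, and the induction hypothesis $Z_M^{(l-1)} \ge h_\varepsilon^{l-1}(\varepsilon)$ together with monotonicity of $h_\varepsilon$ yields $Z_M^{(l)} \ge h_\varepsilon\bigl(h_\varepsilon^{l-1}(\varepsilon)\bigr) = h_\varepsilon^l(\varepsilon)$, closing the induction.

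For part (2), I would show that the sequence $a_l := h_\varepsilon^l(\varepsilon)$ is monotone and bounded, hence convergent. Boundedness is immediate since each $a_l \in [0,1]$ (the argument above shows $h_\varepsilon$ maps $[0,1]$ into $[0,\varepsilon] \subseteq [0,1]$). For monotonicity, I would compare $a_1 = h_\varepsilon(\varepsilon)$ with $a_0 = \varepsilon$: since the inner expression evaluated at $x=\varepsilon$ is a probability $\le 1$ and $\lambda(1) = \sum_i \lambda_i = 1$, we get $h_\varepsilon(\varepsilon) \le \varepsilon\cdot 1 = \varepsilon$, so $a_1 \le a_0$. Applying the nondecreasing map $h_\varepsilon$ repeatedly preserves this ordering, so $a_{l+1} \le a_l$ for all $l$; the sequence is nonincreasing and bounded below by $0$, hence the limit $L(\varepsilon) := \lim_{l\to\infty} h_\varepsilon^l(\varepsilon)$ exists. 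For the dependence on $\varepsilon$, I would argue that $h_\varepsilon(x)$ is nondecreasing in $\varepsilon$ for fixed $x$ (it is simply $\varepsilon$ times a nonnegative quantity), so by a straightforward induction $h_\varepsilon^l(\varepsilon)$ is nondecreasing in $\varepsilon$ at each fixed $l$ — here one uses both that the multiplicative factor $\varepsilon$ grows and that the initial point $\varepsilon$ grows, combined with monotonicity of each $h_\varepsilon$ in its argument. Passing to the limit preserves the (non-strict) inequality, so $L(\varepsilon)$ is nondecreasing in $\varepsilon$.

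The main obstacle I anticipate is the claim in part (2) that the limit is \emph{increasing} (as opposed to merely nondecreasing) in $\varepsilon$, since a naive limiting argument only preserves weak inequalities and the fixed-point structure could in principle make $L$ locally constant. I expect the paper either means "nondecreasing" or intends this only in the regime below the threshold where $L(\varepsilon)=0$ transitions to $L(\varepsilon)>0$; in a careful write-up I would either settle for the monotone (nondecreasing) statement, which follows cleanly as above, or invoke strict monotonicity of $h_\varepsilon$ in $\varepsilon$ wherever the iteration does not collapse to the trivial fixed point $0$. A secondary technical point worth care is verifying that the composition $h_\varepsilon^l$ is well-defined as a self-map of $[0,1]$ at every stage, which is handled once the "$h_\varepsilon$ maps $[0,1]$ to $[0,\varepsilon]$" observation is in place.
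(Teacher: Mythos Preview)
Your proposal is correct and follows essentially the same approach as the paper: establish monotonicity of $h_\varepsilon$ in $x$ (via the derivative computation showing the inner argument has derivative $x\rho''(1-x)\ge 0$, equivalently the paper's $g'(x)=-x\rho''(1-x)\le 0$), then combine the recurrence \eqref{recurrence_ineq} with this monotonicity to push the iteration down to $h_\varepsilon^l(\varepsilon)$; for part~(2) you give the standard monotone-bounded-sequence argument, whereas the paper simply cites the analogous BEC argument in \cite{mct}. Your caveat about ``increasing'' versus ``nondecreasing'' is well taken---the paper does not address strictness either, and nondecreasing is all that is actually used downstream (Theorem~\ref{upper_bound_thm} and the definition of $\varepsilon^*$).
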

The proof of this lemma is provided in Appendix \ref{proof:lem_he}. Observing that $h_{\varepsilon}^l\left( \varepsilon \right)=0$ for $\varepsilon=0$ and using the second part of Lemma \ref{lem:he}, we are able to define the following value:
\begin{equation}
\label{epsilon_star_definition}
{\varepsilon ^*} = \sup \left\{ {\varepsilon  \in \left[ {0,1} \right]:\mathop {\lim }\limits_{l \to \infty } h_\varepsilon ^l\left( \varepsilon \right) = 0} \right\}.
\end{equation}
Note that ${\varepsilon ^*}$ is defined with respect to a certain degree-distribution pair $\lambda$ and $\rho$. This definition of ${\varepsilon ^*}$ leads to an upper bound on $\varepsilon_{\rm th}^{\rm c}$.

\begin{theorem}
\label{upper_bound_thm}
For a QPEC with $M>q/2$, $\varepsilon_{\rm th}^{\rm c} \le {\varepsilon ^*}$.
\end{theorem}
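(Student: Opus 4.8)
The plan is to show that whenever $\varepsilon > \varepsilon^*$, the cardinality-based density evolution cannot converge to zero decoding-failure probability, which by definition of $\varepsilon_{\rm th}^{\rm c}$ gives $\varepsilon_{\rm th}^{\rm c} \le \varepsilon^*$. The key quantity to track is $Z_M^{(l)}$, the probability of a VTC message of cardinality $M$ at iteration $l$. The first step is to invoke part~1 of Lemma~\ref{lem:he}, which already gives $Z_M^{(l)} \ge h_\varepsilon^l(\varepsilon)$ for all $l \ge 1$; this reduces the problem to controlling the scalar recursion $h_\varepsilon^l(\varepsilon)$ rather than the full $q$-dimensional density evolution. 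Then, by part~2 of Lemma~\ref{lem:he}, the limit $L(\varepsilon) := \lim_{l\to\infty} h_\varepsilon^l(\varepsilon)$ exists and is nondecreasing in $\varepsilon$.

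The second step is to argue that for $\varepsilon > \varepsilon^*$ one has $L(\varepsilon) > 0$: this is essentially immediate from the definition \eqref{epsilon_star_definition} of $\varepsilon^*$ as a supremum, together with monotonicity — if $L(\varepsilon) = 0$ for some $\varepsilon > \varepsilon^*$, then by monotonicity $L$ would vanish on the whole interval $[0,\varepsilon]$, contradicting that $\varepsilon^*$ is the supremum of the set where the limit is zero. One should be slightly careful about whether the supremum is attained, but for any $\varepsilon$ strictly greater than $\varepsilon^*$ the argument is clean. Combining with the first step, $\liminf_{l\to\infty} Z_M^{(l)} \ge L(\varepsilon) > 0$ whenever $\varepsilon > \varepsilon^*$.

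The third step is to translate a persistently positive $Z_M^{(l)}$ into a persistently positive decoding-failure probability $p_e^{(l)} = \sum_{m=2}^q Z_m^{(l)} = 1 - Z_1^{(l)}$ from \eqref{Perror_cardinality}. Since $M \ge 2$ (partial erasure means $M\ge 2$), a message of cardinality $M$ is in particular a message of cardinality greater than $1$, so $p_e^{(l)} \ge Z_M^{(l)} \ge h_\varepsilon^l(\varepsilon)$, and hence $\liminf_{l\to\infty} p_e^{(l)} \ge L(\varepsilon) > 0$ for every $\varepsilon > \varepsilon^*$. Thus decoding failure does not tend to zero for any such $\varepsilon$, so $\varepsilon_{\rm th}^{\rm c}$, being the supremum of $\varepsilon$ for which $p_e^{(l)} \to 0$, cannot exceed $\varepsilon^*$.

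The main obstacle is really concentrated in Lemma~\ref{lem:he} (proved in the appendix) rather than in this theorem: establishing the chain of inequalities \eqref{wq}–\eqref{recurrence_ineq} and then showing that iterating the monotone-looking map $h_\varepsilon$ from the initial point $\varepsilon$ produces a monotone sequence with a limit that depends monotonically on $\varepsilon$. Given that lemma, the theorem itself is a short monotonicity-and-supremum argument; the only subtlety to watch is the boundary behavior at $\varepsilon = \varepsilon^*$ and making sure the $q$-condition requirement $M > q/2$ (used to derive \eqref{wq}) is carried through as a hypothesis, which it is in the statement.
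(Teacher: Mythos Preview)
Your proposal is correct and follows essentially the same approach as the paper's own proof: invoke Lemma~\ref{lem:he} to get $Z_M^{(l)} \ge h_\varepsilon^l(\varepsilon)$, use the definition of $\varepsilon^*$ together with monotonicity of the limit to conclude that $Z_M^{(l)}$ stays bounded away from zero for $\varepsilon > \varepsilon^*$, and then observe that $p_e^{(l)} \ge Z_M^{(l)}$ so the decoding-failure probability cannot vanish. Your write-up is in fact more careful than the paper's two-sentence proof, making explicit the monotonicity-and-supremum step and the inequality $p_e^{(l)} \ge Z_M^{(l)}$ (since $M \ge 2$), but there is no substantive difference in strategy.
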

\begin{proof}
$Z_M^{\left( l \right)}$ is bounded from below by a strictly positive value for all $l$ when $\varepsilon  > {\varepsilon^*}$, according to Lemma \ref{lem:he} and the definition of $\varepsilon^*$ in \eqref{epsilon_star_definition}. Since the probability of decoding failure according to the cardinality-based approach \eqref{Perror_cardinality} in this case is necessarily non-zero, $\varepsilon_{\rm th}^{\rm c}$ cannot exceed ${\varepsilon ^*}$. \qed
\end{proof}
For the formulation of an LP optimization, we derive an equivalent definition for $\varepsilon^*$, by extending the fixed-point characterization of the BEC threshold \cite{Richardson1}.
\begin{theorem}
\label{thm:epsilon_star_fp} For a QPEC with $M>q/2$,
\begin{equation}
\label{epsilon_star_equivalent_def}
{\varepsilon ^*} = \sup \bigg\{ {\varepsilon  \in \left[ {0,1} \right]:x = {h_\varepsilon }\left( x \right){\text{ has no solution }}x{\text{ in }}\left( {0,1} \right]} \bigg\}.
\end{equation}
\end{theorem}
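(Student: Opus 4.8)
The plan is to show that the two definitions of $\varepsilon^*$ in \eqref{epsilon_star_definition} and \eqref{epsilon_star_equivalent_def} describe the same supremum, by analyzing the dynamics of the scalar recursion $x \mapsto h_\varepsilon(x)$ on $[0,1]$. The key structural facts I would establish first are: (i) $h_\varepsilon$ is continuous and monotonically non-decreasing on $[0,1]$ for each fixed $\varepsilon$ (this follows because $\lambda(x)$ has non-negative coefficients, hence is increasing, and because the inner map $x \mapsto 1-\rho(1-x)-x\rho'(1-x)$ is non-decreasing in $x$ on $[0,1]$, a fact that is noted implicitly around \eqref{recurrence_ineq}); (ii) $h_\varepsilon(x)$ is non-decreasing in $\varepsilon$ for each fixed $x$, since it is simply $\varepsilon$ times a non-negative quantity; and (iii) $h_\varepsilon(0)=0$, so $0$ is always a fixed point. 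Given (i), starting the iteration at the value $\varepsilon \ge h_\varepsilon(0)=0$ and using monotonicity, the sequence $h_\varepsilon^l(\varepsilon)$ is monotone (one first checks the sign of $h_\varepsilon(\varepsilon)-\varepsilon$; in the regime of interest near the threshold it is decreasing) and bounded, hence converges to a fixed point $x_\infty(\varepsilon)$ of $h_\varepsilon$, consistent with part 2 of Lemma \ref{lem:he}.

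The core of the argument is then the equivalence of the two events ``$\lim_{l\to\infty} h_\varepsilon^l(\varepsilon)=0$'' and ``$x=h_\varepsilon(x)$ has no solution in $(0,1]$''. For the direction that a nonzero fixed point obstructs convergence to $0$: if $x^\star\in(0,1]$ satisfies $x^\star=h_\varepsilon(x^\star)$, then since $\varepsilon \le 1$ is... actually the cleaner route is to argue that the iteration started from $\varepsilon$ is trapped above any fixed point it starts above, or below any it starts below; the relevant claim is that if $h_\varepsilon$ has a fixed point in $(0,1]$ then the smallest such fixed point $x_{\min}^\star$ satisfies $h_\varepsilon^l(\varepsilon) \to x_\infty(\varepsilon) \ge$ (something positive) — here I would invoke that on $[0,\varepsilon]$ the only fixed point, when there is no fixed point in $(0,1]$, is $0$, so the monotone limit must be $0$; conversely if there IS a fixed point in $(0,1]$, continuity and monotonicity of $h_\varepsilon$ combined with $h_\varepsilon(x) \ge 0$ force the iterate to stay bounded away from $0$ (the limit is the largest fixed point $\le \varepsilon$, or if none is $\le\varepsilon$ then the iterate increases and the limit is $\ge$ the smallest fixed point). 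This is the standard fixed-point / cobweb analysis used for the BEC in \cite{Richardson1}, adapted to $h_\varepsilon$.

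Finally I would upgrade the pointwise equivalence to an equality of suprema. Define $\varepsilon_1^* $ by \eqref{epsilon_star_definition} and $\varepsilon_2^*$ by \eqref{epsilon_star_equivalent_def}. For $\varepsilon < \varepsilon_2^*$ there is no fixed point in $(0,1]$, so by the equivalence $\lim_l h_\varepsilon^l(\varepsilon)=0$, giving $\varepsilon_1^* \ge \varepsilon_2^*$; and for $\varepsilon < \varepsilon_1^*$ the limit is $0$, so there can be no fixed point in $(0,1]$ (such a fixed point would prevent convergence to $0$ by the previous paragraph), giving $\varepsilon_2^* \ge \varepsilon_1^*$. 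Here monotonicity in $\varepsilon$ (fact (ii)) is what guarantees that the set appearing in each definition is an interval of the form $[0,\varepsilon^*)$ or $[0,\varepsilon^*]$, so that ``supremum'' is meaningful and the two sets coincide. The main obstacle I anticipate is the careful case analysis in the fixed-point argument: one must handle both the case where the iteration decreases to its limit and the case where it increases, and rule out the possibility of a nonzero fixed point coexisting with convergence to $0$ — this requires using continuity of $h_\varepsilon$ together with the fact that $h_\varepsilon$ maps $[0,1]$ into $[0,\varepsilon]\subseteq[0,1]$ and is non-decreasing, so that no ``jump over'' a fixed point is possible.
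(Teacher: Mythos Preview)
Your proposal is correct and follows exactly the approach the paper intends: the paper does not give its own proof but simply states that the argument is the same as Theorem~3.59 in \cite{mct} (the standard BEC fixed-point characterization), and your monotonicity-plus-cobweb analysis of the scalar recursion $x\mapsto h_\varepsilon(x)$ is precisely that argument adapted to $h_\varepsilon$. One small tightening: since $h_\varepsilon(x)\le \varepsilon\lambda(1)=\varepsilon$ for all $x\in[0,1]$, any fixed point in $(0,1]$ automatically lies in $(0,\varepsilon]$, so the case ``no fixed point $\le \varepsilon$'' in your backward direction never occurs and the argument simplifies.
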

The proof of this theorem is similar to the proof of Theorem 3.59 in \cite{mct}, and is omitted. We now formulate an LP optimization for determining good (in terms of code rate) variable-node degree distribution $\lambda(x)$ for given $\rho(x)$ and $\varepsilon^*$ assuming that $M>q/2$. A maximum constraint $d_v$ on variable-node degrees is set, as usual~\cite{mct}, to control implementation complexity and convergence speed. According to the $\varepsilon^*$ equivalent definition \eqref{epsilon_star_equivalent_def}, the condition for degree distributions whose threshold is upper bounded by $\varepsilon^*$ is that $h_{\varepsilon^*}(x) - x \le 0$ for $x \in \left( {0,1} \right]$. This leads us to formulate an LP optimization for the QPEC, where maximal rate is sought under the constraint that $\varepsilon_{\rm th}^{\rm c}$ is upper bounded by $\varepsilon^*$:
\begin{align}
\label{opt_problem1}
{\max _\lambda } \bigg\{ \sum\limits_{i = 2}^{{d_v}} \frac{{{\lambda _i}}}{i}:&{\lambda _i} \ge 0,\sum\limits_{i = 2}^{{d_v}} {{\lambda _i}}  = 1,h_{\varepsilon^*}(x) - x\le 0, x \in \left( {0,1} \right] \bigg\}.
\end{align}
We term the LP optimization in \eqref{opt_problem1} as QPEC* LP. Note that the decoding threshold increases as $M$ decreases, such that the degree distributions obtained by QPEC* LP provide at least the same threshold for a QPEC with $M \le q/2$. The difference between the known BEC (or QEC) LP \cite{mct} and QPEC* LP is in using in \eqref{opt_problem1} the function $h_{\varepsilon^*}(x)$ specially developed for the QPEC, instead of the function ${f_\varepsilon }(x) = \varepsilon \cdot \lambda \left( {1 - \rho \left( {1 - x} \right)} \right)$ derived from the BEC density-evolution equation. 

The QPEC* LP optimization provides a degree-distribution pair with $\varepsilon_{\rm th}^{\rm c}$ upper-bounded by $\varepsilon^{*}$. This suggests the following strategy for obtaining degree distributions with a desired value of $\varepsilon_{\rm th}^{\rm c}$. Choose $\varepsilon^*$ that is larger than the desired $\varepsilon_{\rm th}^{\rm c}$, and solve the QPEC* LP optimization. Find $\varepsilon_{\rm th}^{\rm c}$ of the optimized degree distributions using the cardinality-based density-evolution equations (where the union model is suggested for large $q$). If the threshold is smaller than $\varepsilon_{\rm th}^{\rm c}$, increase $\varepsilon^{*}$ and repeat the process. Otherwise, decrease $\varepsilon^{*}$ and repeat the process. An alternative design method using previously known theoretical tools is to seek the desired $\varepsilon_{\rm th}^{\rm c}$ using the BEC LP optimization. Because the BEC is a degraded version of the QPEC, here we will choose a target BEC threshold $\varepsilon_{\rm th}^{\rm BEC}$ smaller than the desired $\varepsilon_{\rm th}^{\rm c}$. We then similarly calculate $\varepsilon_{\rm th}^{\rm c}$ of the resulting degree distributions, and decrease/increase the BEC threshold as needed (note that the BEC LP approach is valid for $M \le q/2$ as well).

It turns out that using the QPEC* LP approach can result in better codes compared to the BEC optimization. In the sequel we show this by numerical examples. The intuition behind this improvement is that the QPEC* LP optimization better captures the decoding performance for QPECs with $M<q$. We now show the benefit of the new QPEC LP optimization in achieving better code ensembles than those obtained using the BEC LP optimization. As an example, assume that $\rho(x)=x^5$, $d_v=5$ and the desired $\varepsilon_{\rm th}^{\rm c}$ is $0.6$. We concentrate here on QPECs with $M = \left\lfloor {q/2} \right\rfloor  + 1$ for several values of $q$ (this value of $M$ is the smallest satisfying $M>q/2$). An illustration of the iterative optimization process is provided in Figure \ref{opt_results}. The plot shows the sequence of optimization runs of the QPEC* optimizer (right), and the sequence of runs for the BEC optimizer (left). The QPEC* LP approaches the target of $\varepsilon_{\rm th}^{\rm c} = 0.6$ from above, and the BEC LP from below. Note the approximate linear behaviour of $\varepsilon_{\rm th}^{\rm c}$ as a function of $\varepsilon^*$, rendering the iterated QPEC* LP as a simpler way for code design. As a consequence, reaching the desired QPEC threshold took typically fewer optimization instances with the QPEC* optimizer than with the BEC optimizer.   

\begin{figure}[t!]
\centering
\includegraphics[scale=0.55]{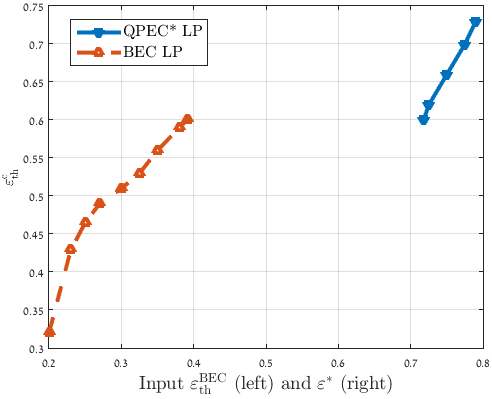}
\caption{An example of the iterative optimization process ($q=3, M=2$).}
\label{opt_results}
\end{figure}
\begin{figure*}[t!]
\begin{center}
\captionof{table}{Iterative optimization results for $\rho(x)=x^5$, $d_v=5$ ($\varepsilon_{\rm th}^{\rm c} = 0.6$).} \label{tab:opt_results_QPEC} \medskip
    \begin{tabular}{| c | c || c | c | c || c | c | c|}
    \hline
    $q$ & $M$ & QPEC* LP $\lambda(x)$ \Tstrut \Bstrut & Rate & $\varepsilon^*$ & BEC LP $\lambda(x)$ & Rate & $\varepsilon_{\rm th}^{\rm BEC}$ \\ \hline \Tstrut
    $3$ & $2$ & $0.644x + 0.356{x^4}$ & $0.576$ & $0.718$ & $0.517x + 0.099x^2+0.384x^3$ & $0.569$ & $0.391$ \\ \hline \Tstrut
    
    $4$ & $3$ &$0.193x + 0.807{x^4}$ & $0.354$ & $0.778$  & $0.157x+0.843x^4$ & $0.325$ & $0.532$ \\ \hline \Tstrut

    $5$ & $3$ & $0.489x + 0.511{x^4}$ & $0.519$ & $0.751$ & $0.437x+0.056x^2 +0.507x^4$ & $0.508$ & $0.464$ \\ \hline \Tstrut
    
    $7$ & $4$ & $0.372x + 0.628{x^4}$  &  $0.465$ & $0.763$ & $0.345x+0.655x^4$ &  $0.451$ & $0.492$ \\ \hline \Tstrut
    
    $8$ & $5$ & $0.46x + 0.54x^4$  & $0.507$ & $0.749$ & $0.413x+0.587x^4$ &  $0.485$ & $0.48$ \\ \hline \Tstrut
    
        $16$ & $9$ & $0.422x+0.578x^4$ & $0.489$ & $0.754$ & $0.385x + 0.615x^4$  &  $0.471$ & $0.487$ \\\hline

\end{tabular}

\end{center}
\end{figure*}

\begin{figure*}[t!]
    \centering
    \begin{subfigure}[t]{0.5\textwidth}
        \centering
        \includegraphics[scale=0.55]{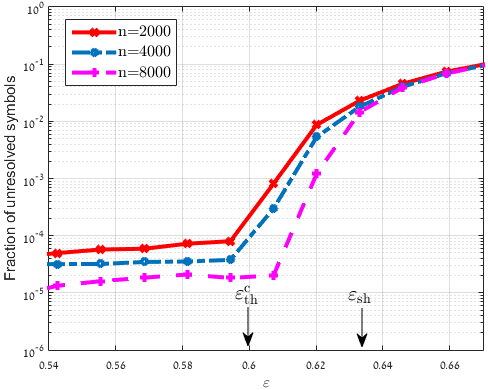}
        \caption{QPEC* LP.}
\label{qpec_lp}
    \end{subfigure}%
    ~ \hspace{5pt}
    \begin{subfigure}[t]{0.5\textwidth}
        \centering
        \includegraphics[scale=0.55]{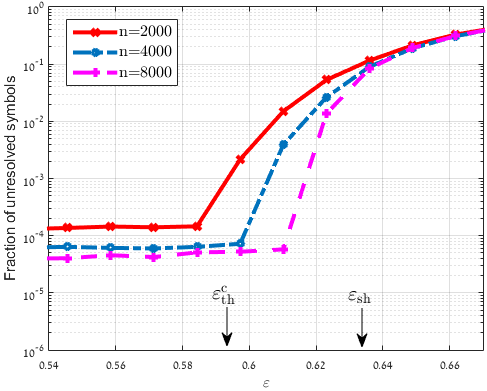}
        \caption{BEC LP.}
\label{bec_lp}
    \end{subfigure}
    \caption{Finite-length decoding performance of the LP optimized degree distributions designed for $q=8, M=5$ in Table \ref{tab:opt_results_QPEC} (rate $0.507$). $\varepsilon_{\rm sh}=0.637$ is the Shannon limit for these QPEC parameters with rate $0.507$.}
    \label{fig:LP_opt}
\end{figure*}

The optimized variable degree distributions and their corresponding rates are listed in Table \ref{tab:opt_results_QPEC}, together with the values of $\varepsilon^*$ and $\varepsilon^{\rm BEC}_{\rm th}$ resulting in $\varepsilon_{\rm th}^{\rm c}$. When comparing the results, we observe that for all the parameters checked the rates achieved by the QPEC* optimizer are strictly better than the rates resulting from the BEC optimizer. Another interesting observation is that in some cases the BEC optimizer required a $\lambda(x)$ polynomial with more non-zero coefficients than the QPEC optimizer. To evaluate the decoding performance in the practical setting of finite-length codes, we constructed random parity-check matrices for varying code lengths and performed $80$ iterative decoding iterations using the message-passing decoder described in Section \ref{message_passing}. In Figure \ref{fig:LP_opt}, we compare the performance of the QPEC* LP and BEC LP optimized degree distributions for $q=8$ and $M=5$. For a fair comparison, we set $\varepsilon_{\rm th}^{\rm BEC}$ to $0.472$ to obtain a degree-distribution pair with the same rate ($0.507$) as in the QPEC* LP optimization, leading to $\varepsilon_{\rm th}^{\rm c}=0.594$ in the BEC LP optimization. The QPEC* optimized degree-distribution pair exhibits notably superior decoding performance, for complexity similar to the BEC LP. That is, the QPEC* LP, tailored for the QPEC, better captures the channel behaviour compared to the use of the BEC LP.

\subsection{Code design using the union model}
\label{code_design_union}

\begin{figure*}[t]
\begin{center}
\captionof{table}{Optimized variable degree distributions (the code rates are approximately $1/2$).}\label{tab:QPEC_lp_union} \medskip 
    \begin{tabular}{| c | c | c |}
    \hline \Tstrut
        $q$ & $M$ & Optimized $\lambda(x)$ \\  \hline \Tstrut
    $4$ & $2$ & $0.0522x + 0.0799{x^2} + 0.6983{x^3} + 0.1668{x^4} + 0.0028{x^6}$  \\  \hline  \Tstrut
      $8$ & $2$ & $0.0394x + 0.0298{x^2} + 0.9007{x^3} + 0.0099{x^4} + 0.0142{x^5} + 0.005{x^7} + 0.001{x^8}$  \\  \hline \Tstrut
        $8$ & $4$ & $0.0813x + 0.0402{x^2} + 0.7717{x^3} + 0.0196{x^4} + 0.082{x^5} + 0.0025{x^6} + 0.0007{x^7} + 0.002{x^9}$   \\  \hline  \Tstrut
          $16$ & $4$ & $0.0526x + 0.0584{x^2} + 0.8237{x^3} + 0.0095{x^4} + 0.0104{x^5} + 0.044{x^6} + 0.0014{x^7}$ \\  \hline     \Tstrut        
              $16$ & $8$ & $0.0226x + 0.2002{x^2} + 0.6592{x^3} + 0.0593{x^4} + 0.0225{x^5} + 0.0128{x^6} + 0.0091{x^7} + 0.0075{x^8} + 0.0068{x^9}$   \\  \hline   
                         
\end{tabular}
\medskip
\end{center}
\end{figure*}

\begin{figure*}[t]
    \centering
    \begin{subfigure}[t]{0.5\textwidth}
        \centering
        \includegraphics[scale=0.55]{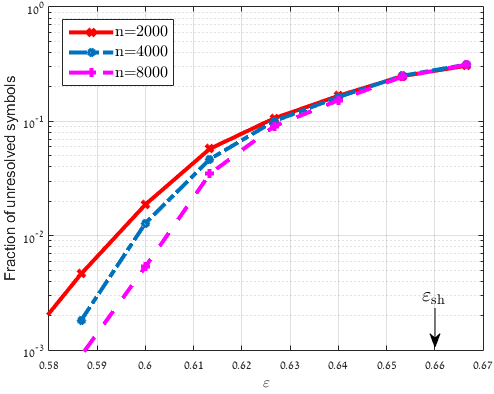}
        \caption{Union-optimized degree distributions.}
\label{union_finite}
    \end{subfigure}%
    ~ \hspace{5pt}
    \begin{subfigure}[t]{0.5\textwidth}
        \centering
        \includegraphics[scale=0.55]{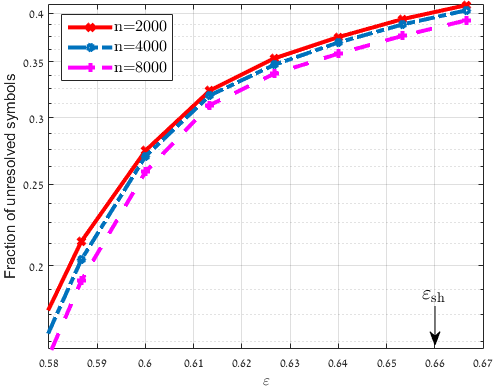}
        \caption{BEC LP optimized degree distributions.}
\label{union_finite_bec}
    \end{subfigure}
    \caption{Finite-length decoding performance of the optimized degree distributions for $q=16, M=8$ in Table \ref{tab:QPEC_lp_union} (rate $1/2$). The BEC LP results are shown for comparison. $\varepsilon_{\rm sh}=0.66$ is the Shannon limit for these QPEC parameters with rate $1/2$.}
    \label{fig:union_design}
\end{figure*}

In this part, we use the union model to obtain degree-distribution pairs that achieve a small probability of decoding failure. We begin with a given degree-distribution pair $\left( {\lambda (x),\rho (x)} \right)$ and a desired (small) probability of decoding failure $p_{\rm tar}$, such that there exists an iteration number $L$ satisfying ${p_e^{(L)}} \le {p_{{\rm{tar}}}} < {p_e^{(L - 1)}}$. Our aim is to find $\tilde \lambda(x)$ that either achieves a lower target of decoding-failure probability, or achieves the same probability in fewer decoding iterations. To find such a variable degree distribution, we adapt the optimization method suggested in \cite{Richardson1} to the QPEC, as follows. Define ${A}_{l,i}$ as the probability of decoding failure at iteration $l$ assuming that we use $\lambda(x)$ at the first $l-1$ iterations followed by the use of the node variable degree distributions with its mass on the degree $i$ at iteration $l$. Note that $p_e^{(l)}$ is obtained as the following sum
\begin{equation}
\label{pe_lambda}
p_e^{(l)} = \sum\limits_{i = 2}^{{d_v}} {{{A}_{l,i}} \cdot {\lambda _i}}.
\end{equation}
We also define the probability of decoding failure due to degree-$i$ variable nodes assuming that $\lambda(x)$ is used for the first $l-1$ iterations and that $\tilde \lambda (x)$ is used at iteration $l$ as ${\tilde p}_e^{(l)}$, which is the right-hand size of \eqref{pe_lambda} with $\lambda_i$ replaced with $\tilde \lambda _i$
The number of decoding iterations required to obtain the probability of decoding failure $p_e^{(L)}$ is approximated as \cite{Richardson1}
\begin{equation}
\label{exp_iterations}
G\left( \tilde \lambda  \right) \simeq \sum\limits_{l = 1}^L {\frac{{{{\tilde p_e}^{(l)}} - {p_e^{(l)}}}}{{{p_e^{(l - 1)}} - {p_e^{{(l)}}}}}}, 
\end{equation}
assuming that $\tilde \lambda(x)$ and $\lambda(x)$ do not differ much. Finally, the following LP optimization is obtained
\begin{align}
\label{lp1}
{\min _{\tilde \lambda }} \bigg\{ G\left( {\tilde \lambda } \right):&{{\tilde \lambda }_i} \ge 0,\sum\limits_{i = 2}^{{d_v}} {{{\tilde \lambda }_i}}  = 1,{{\tilde p}_e^{(l)}} \le {p_e^{(l - 1)}},
\\\nonumber
&{{\max }_l}\frac{{\left| {{{\tilde p}_e^{(l)}} - {p_e^{(l)}}} \right|}}{{{p_e^{(l - 1)}} - {p_e^{(l)}}}} \le \delta \bigg\} ,
\end{align}
for $1 \le l \le L$ and $\delta  \ll 1$. That is, the LP optimization in \eqref{lp1} searches for a perturbed version of $\lambda(x)$ with better performance, i.e., with a smaller number of decoding iterations resulting in the desired probability of decoding failure. The LP optimization is repeated with the optimized $\tilde \lambda (x)$ as an input until convergence is achieved.

The major difficulty in solving the LP optimization in \eqref{lp1} in the QPEC case is the need to calculate $p_e^{(l)}$. The calculation of this probability is difficult even when the set cardinality approach is taken, as we saw in Section \ref{de_equations}. Therefore, we calculate this probability under the union model, motivated by its good approximation of the exact behaviour (see Section \ref{subsection:comparison}). We concentrate on an initial $\rho(x)$ with only two non-zero consecutive degrees, which usually provide good results \cite{Richardson1, chung_thesis} and limit the search space. To control complexity, the maximum variable node degree is set to $d_v=10$. The initial $\lambda(x)$ was chosen such that the number of non-zero degrees is small \cite{chung_thesis} and the initial rate is close to $1/2$. Table \ref{tab:QPEC_lp_union} summarizes the optimization results for several values of $M$ and $q$, where $\rho \left( x \right) = 0.4 \cdot {x^6} + 0.6 \cdot {x^7}$ is the initial check degree distribution.

For comparison, we used the known BEC LP optimization \cite{mct} to obtain a good degree distribution with rate $1/2$ (as the code rates in Table \ref{tab:QPEC_lp_union}). The BEC optimized variable degree-distribution pair is ${\lambda _{{\rm{BEC}}}}\left( x \right) = 0.3195x + 0.1554{x^2} + 0.5251{x^9}$, obtained by setting  $\rho \left( x \right) = 0.4 \cdot {x^6} + 0.6 \cdot {x^7}$ and the BEC threshold $0.473$ in the BEC LP. In Figure \ref{fig:union_design}, we present the average finite-length decoding performance of the optimized degree distributions obtained in Table \ref{tab:QPEC_lp_union} for $q=16$ and $M=8$ compared to the BEC LP optimized degree-distribution pair. As expected, the QPEC-designated optimization \eqref{lp1} provides significantly better results compared to simply using the BEC LP. This is explained by the strict requirement in the BEC LP to recover from full erasures, whereas only $M=8$ partial erasures should be considered.

\section{Conclusion}
\label{conc}

This work offers a study of the performance of iterative decoding of GF($q$) LDPC codes over the newly defined QPEC. Generalizing the BEC to partial erasures, the QPEC serves as a useful model for partial data loss. We extended the BEC decoder to deal with partial erasures, and demonstrated the spectrum of possible messages in the QPEC decoding process. As a consequence, we showed that the QPEC, unlike the BEC, introduces non-trivial challenges in performance analysis. Therefore, we developed efficient approximation models, which provide important tools for the QPEC decoding performance evaluation. We also suggested LP optimizations for finding LDPC codes with good decoding performance, which provided better results compared to the known BEC LP optimization.

The QPEC model is an initial step in the analysis of measurement channels. These channels and the concept of partial erasures encourage the development of additional models and efficient analysis methods. As a future research, it is suggested to investigate the relation between the algebraic operations of the field and the models proposed in this work. In addition, apart from the random sets provided as the QPEC output, one may consider a model in which the output sets are structured (e.g., contain consecutive levels).

\ifCLASSOPTIONcaptionsoff
  \newpage
\fi

\appendices

\section{Proof of Theorem \ref{QPEC_cap}}
\label{proof:uniform_dist}
Define the sets $\left\{ {{\mathcal{A}_i}} \right\}_{i = 1}^{T}$, each containing $M$ elements taken from $\mathcal{X}$, such that $\mathcal{A}_i \ne \mathcal{A}_{j}$ for $i \ne j$ and $T={q \choose M}$. The output symbol $Y$ is a set of either one symbol or $M$ symbols. Its entropy given an input distribution ${\left\{ {{p_x}} \right\}}$ (for $x=0,\alpha^0,\alpha^1,...,\alpha^{q-2}$) is:
\begin{align}
\label{eq:Hy_px}
&H\left( {Y;\left\{ {{p_x}} \right\}} \right)  =  - \sum\limits_{x \in \mathcal{X}} {p_x(1-\varepsilon)\log \left( {{p_x}(1-\varepsilon)} \right)} {\text{ }} \nonumber \\ &- \sum\limits_{i= 1}^T {\left( {\frac{\varepsilon}{{q-1 \choose M-1}}\sum\limits_{x \in {\mathcal{A}_i}} {{p_x}} } \right)\log \left( {\frac{\varepsilon}{{q-1 \choose M-1}}\sum\limits_{x \in {\mathcal{A}_i}} {{p_x}} } \right)}.
\end{align}
The capacity achieving distribution can be found by solving the following maximization problem:
\begin{equation}
{\max _{\left\{ {{p_x}} \right\}}}\hspace{2pt}H\left( {Y;\left\{ {{p_x}} \right\}} \right),\hspace{7pt} {\text{s}}.{\text{t}}. \hspace{2pt} \sum\limits_{x \in \mathcal{X}} {{p_x}}  = 1.
\end{equation}
Using the method of Lagrange multipliers, we get the following system of equations:
\begin{align}
&\frac{{\partial H\left( {Y;\left\{ {{p_x}} \right\}} \right)}}{{\partial {p_x}}} + \lambda  = 0, {\hspace{3pt} \rm for \hspace{3pt}} x=0,\alpha^0,\alpha^1,...,\alpha^{q-2}
\\\nonumber
&{\rm s.t.} \hspace{3pt} \sum\limits_{x \in \mathcal{X}} {{p_x}}  = 1,
\end{align}
where $\lambda$ is the Lagrange multiplier. These equations translate into:
\begin{equation}
\begin{array}{l}
 - (1 - \varepsilon )\left( {\log {p_x} + 1} \right) - \sum\limits_{{\mathcal{A}_i}:x \in {\mathcal{A}_i}} {{\frac{\varepsilon}{{q-1 \choose M-1}}}\log \left( {{\frac{\varepsilon}{{q-1 \choose M-1}}}\sum\limits_{x \in {\mathcal{A}_i}} {{p_x}} } \right)}  \\
 - {\frac{\varepsilon}{{q-1 \choose M-1}}}\left( {T - 1} \right) + \lambda  = 0,\hspace{7pt} \sum\limits_{x \in \mathcal{X}} {{p_x}}  = 1,
\end{array}
\end{equation}
which are satisfied if $p_x=1/q$ for all $x \in \mathcal{X}$. The mutual information $I\left( {X;Y} \right)$ is a concave function of $p_x$, and therefore $p_x=1/q$ leads to the global maximum of \eqref{cap_def}, that is, to the capacity. Finally, the QPEC capacity \eqref{QPEC_capacity} is obtained by substituting \eqref{eq:Hy_px} and \eqref{cond_entropy} in \eqref{cap_def} and setting $p_x=1/q$ for all $x$. \qed

\section{Proof of Lemma \ref{lem:he}}
\label{proof:lem_he}

We begin by proving that $h_\varepsilon(x)$ is an increasing function of both $\varepsilon$ and $x$, for $\varepsilon, x  \in \left[ {0,1} \right]$, by taking the partial derivatives of $f\left( {\varepsilon ,x} \right)$ with respect to $\varepsilon$ and $x$:
\begin{equation}
\label{diff_e}
\frac{{\partial f}}{{\partial \varepsilon }} = \lambda \left( {1 - \rho \left( {1 - x} \right) - x\rho '\left( {1 - x} \right)} \right),
\end{equation}
\begin{equation}
\label{diff_x}
\frac{{\partial f}}{{\partial x}} = \varepsilon x\lambda '\left( {1 - \rho \left( {1 - x} \right) - x\rho '\left( {1 - x} \right)} \right)\rho ''\left( {1 - x} \right),
\end{equation}
where $\rho'(x)$ and $\rho''(x)$ denote the first and second derivatives of $\rho(x)$. The polynomials $\rho \left( x \right), \lambda \left( x \right)$ and their derivatives are power series of $x$ with non-negative coefficients, and as such they are non-negative for $x \ge 0$. In particular, $\rho ''\left( {1 - x} \right) \ge 0$ since $0 \le 1- x \le 1$. Therefore, it is sufficient to prove that $g(x) = { \rho \left( {1 - x} \right) + x\rho '\left( {1 - x} \right)} \le 1$ for establishing the non-negativity of the partial derivatives \eqref{diff_e} and \eqref{diff_x}. This is proved in the following manner. The derivative of $g(x)$ in the interval $(0,1)$ satisfies $g'\left( x \right) =  - x\rho ''\left( {1 - x} \right) < 0$, meaning that $g(x)$ is a decreasing function of $x$. In particular, $g\left( x \right) \le g\left( 0 \right) = \rho \left( 1 \right) = 1$, as needed. Now, $z_M^{\left( l \right)} \ge h_\varepsilon\left( {z_M^{\left( {l - 1} \right)}} \right)$ according to the inequality in \eqref{recurrence_ineq}. Thus,
\begin{equation}
\label{ineq_mon}
h_\varepsilon\left( {z_M^{\left( {l - 1} \right)}} \right) \ge h_\varepsilon\left( {h_\varepsilon\left( {z_M^{\left( {l - 2} \right)}} \right)} \right), \hspace{0.2in} l \ge 2.
\end{equation}
As we saw earlier, $h_\varepsilon(x)$ is an increasing function of $x$. Repeated application of the monotonicity property to the right-hand side of \eqref{ineq_mon} leads to the inequality ${h_\varepsilon }\left( {z_M^{\left( {l - 1} \right)}} \right) \ge h_\varepsilon ^l\left( {z_M^{\left( 0 \right)}} \right)$, where ${h_\varepsilon^l}\left( x \right)$ denotes the $l^{\text{th}}$ composition of $h_\varepsilon(x)$ with itself. Therefore, $z_M^{\left( l \right)} \ge {h_\varepsilon }\left( {z_M^{\left( {l - 1} \right)}} \right) \ge h_\varepsilon ^l\left( {z_M^{\left( 0 \right)}} \right)= h_\varepsilon ^l\left( \varepsilon \right)$, proving the first part of the theorem. The second part of the theorem is proved using the monotonicity property of $h_\varepsilon(x)$ proved above and similar arguments to those used in Section 3.10 of \cite{mct} (where the BEC is considered). \qed

\bibliographystyle{IEEEtran}
	\bibliography{QPEC_paper_bib}

\begin{IEEEbiographynophoto}{Rami Cohen} (S'12) received the B.Sc. in Electrical Engineering and Physics (\textit{cum laude}) and the M.Sc. in Electrical Engineering from the Technion - Israel Institute of Technology in 2010 and 2012, respectively. He is currently pursuing the Ph.D. degree at the Department of Electrical Engineering, Technion - Israel Institute of Technology. His current research interests lie in coding theory and information theory, in particular the analysis and design of codes for high-speed memory devices and systems.
\end{IEEEbiographynophoto}

\begin{IEEEbiographynophoto}{Yuval Cassuto} (S'02-M'08-SM'14) is a faculty member at the Department of Electrical Engineering, Technion -- Israel Institute of Technology. His research interests lie at the intersection of the theoretical infomration sciences and the engineering of practical computing and storage systems.

During 2010-2011 he has been a Scientist at EPFL, the Swiss Federal Institute of Technology in Lausanne. From 2008 to 2010 he was a Research Staff Member at Hitachi Global Storage Technologies, San Jose Research Center. From 2000 to 2002, he was with Qualcomm, Israel R\&D Center, where he worked on modeling, design and analysis in wireless communications.

He received the B.Sc degree in Electrical Engineering, summa cum laude, from the Technion, Israel Institute of Technology, in 2001, and the MS and Ph.D degrees in Electrical Engineering from the California Institute of Technology, in 2004 and 2008, respectively.

Dr. Cassuto has won the 2010 Best Student Paper Award in data storage from the IEEE Communications Society, as well as the 2001 Texas Instruments DSP and Analog Challenge \$100,000 prize.
\end{IEEEbiographynophoto}

\end{document}